\documentclass[runningheads]{llncs}
\usepackage[T1]{fontenc}   
\usepackage{amsmath}
\usepackage{amssymb}
\usepackage{graphicx}
\newtheorem{fact}{Fact}
\newcommand{\junk}[1]{}

\newlength{\boxwidth}
\setlength{\boxwidth}{\textwidth}
\addtolength{\boxwidth}{-1.1em}

\begin{document}





\title{Multiplication of 0-1 matrices via clustering\thanks{A
preliminary version of this article appeared in
\emph{Proceedings of Frontiers of Algorithmics - the 19th International Joint
Conference} (IJTCS-FAW~2025),
Lecture Notes in Computer Science, Vol.~15828, pp.~92--102,
Springer~Nature, 2025.}}
\author{
  Jesper Jansson
  \inst{1}
\and
Miros{\l}aw Kowaluk
\inst{2}
\and
Andrzej Lingas
\inst{3}
\and
  Mia Persson
  \inst{4}}
\institute{
Graduate School of Informatics, Kyoto University, Kyoto, Japan.
\email{jj@i.kyoto-u.ac.jp}
\and  
  Institute of Informatics, University of Warsaw, Warsaw, Poland.
  \texttt{kowaluk@mimuw.edu.pl}
\and
  Department of Computer Science, Lund University,
Lund, Sweden.
\texttt{Andrzej.Lingas@cs.lth.se}
\and
Department of Computer Science and Media Technology, Malm\"o University, Malm\"o, Sweden.
\texttt{Mia.Persson@mau.se}
}
\pagestyle{plain}

\maketitle


\begin{abstract}
 \junk{ We study applications of clustering (in particular the $k$-center
  clustering problem) in the design of efficient and practical
  algorithms for computing an approximate and the exact
  arithmetic matrix product of two 0-1 rectangular matrices $A$ and
  $B$ with clustered rows or columns, respectively.  Let $\lambda(A,\ell,row)$
  and $\lambda(B,k,col)$ denote the minimum maximum radius of a cluster in an
  $\ell$-center clustering of the rows of $A$ and in a $k$-center
  clustering of the columns of $B,$ respectively.  In particular,
  when $A$ and $B$ are square matrices of size $n\times n$,
  we obtain the following results.

  \begin{enumerate}
  \item
    A simple deterministic algorithm that approximates each entry of
    the arithmetic matrix product of $A$ and $B$ within an additive
    error of at most $2\lambda(A,\ell,row)$ in $O(n^2\ell)$ time or at most
    $2\lambda(B,k,col)$ in $O(n^2k)$ time.
  \item
    A simple deterministic preprocessing of the matrices $A$ and $B$
    in $O(n^2\ell)$ time or $O(n^2k)$ time
    after which every query asking
    for the exact value of an arbitrary entry of the arithmetic matrix
    product of $A$ and $B$ can be answered in $O(\lambda(A,\ell,row))$ time or
    $O(\lambda(B,k,col))$ time, respectively.
  \item
    A simple deterministic algorithm for the exact arithmetic matrix
    product of $A$ and $B$ running in time
    $O(n^2(\ell+k+\min\{\lambda(A,\ell,row),\lambda(B,k,col)\}))$.  
  \end{enumerate}

  We also present faster analogous algorithms for
  an approximate and the exact matrix product of the matrices
  based on randomized $\ell$ and $k$ center clusterings.} 
We study applications of clustering (in particular, the
Hamming $k$-center
clustering problem) in the design of efficient and practical
algorithms for computing an approximate and the exact arithmetic
matrix product of two 0-1 rectangular matrices with clustered rows or
columns, respectively. Our results in part can be regarded as an extension of the clustering-based approach to Boolean square matrix multiplication due to Arslan and Chidri (CSC 2011).
We provide a simple and efficient deterministic algorithm for
approximate matrix product of 0-1 matrices, where the additive error
is proportional to the minimum maximum radius in an $\ell$-center
clustering of the rows of the first matrix or an $k$-center clustering
of the columns of the second matrix. We use the approximation
algorithm as a preprocessing after which a query asking for the exact
value of an arbitrary  entry in the product matrix can be answered in time
proportional to the additive error.  As a consequence, we obtain a
simple deterministic algorithm for the exact matrix product of 0-1
matrices.
We also present
  an alternative simple deterministic algorithm for the
exact product and in addition, faster analogous randomized algorithms
for an approximate and the exact matrix products of 0-1 matrices based
on randomized $\ell$- and $k$-center clustering.
\junk{  \footnote{A preliminary version of this article appeared
    in the proceedings of
    Frontiers in Algorithmics - 19th International
    Joint Conference, IJTCS-FAW 2025,
    Lecture Notes in Computer Science 15828, pp. 92-102.}}
\end{abstract}


\begin{keywords}
  arithmetic matrix multiplication,
  $k$-center clustering,
  Hamming space
\end{keywords}


\section{Introduction}
The arithmetic matrix product of two 0-1 matrices is closely related
to the Boolean one of the corresponding Boolean matrices.  Both are
basic tools in science and engineering.  For square $n\times n$
matrices, both can be computed in $O(n^{2.372})$ time by using 
fast matrix multiplication algorithms based on algebra
\cite{ADV25,VXZ24}.  Unfortunately, the latter algorithms suffer from
huge overheads and no truly subcubic practical algorithms for any of
these two problems is known.  Therefore, many researchers studied the
complexity of these products for special input matrices, e.g., sparse
or structured matrices \cite{AC10,Emi24,BL,FJLL18,GL03,P13,YZ}, providing
faster and often more practical algorithms.

The method of multiplying matrices
with clustered rows or columns, proposed for Boolean matrix
product in \cite{BL} and subsequently
generalized in \cite{FJLL18,GL03} and used in
\cite{AMBFGR_25}, relies on the construction of an
approximate minimum spanning tree of the rows of the first input matrix or the
columns of the second input matrix in a Hamming space.
Then, each column or each row of the product matrix
is computed with the help of a traversal of the tree in time
proportional to the total Hamming cost of the tree up to a logarithmic
factor. Simply, the next entry in a column or a row in the product
matrix can be obtained from the previous one in time roughly
proportional to the Hamming distance between the consecutive
(in the tree traversal) corresponding rows or
columns of the first or the second input matrix, respectively.  Thus, in
case the entire tree cost is substantially subquadratic in $n,$ the
total running time of this method becomes substantially subcubic
provided that a good approximation of a minimum spanning tree of the
rows of the first input matrix or the columns of the second one can be
constructed in substantially subcubic time.  As for simplicity and
practicality, a weak point of this method is that in order to
construct such an approximation relatively quickly, it employs a
rather involved randomized algorithm.

Arslan and Chidri presented a simple algorithm for the Boolean
matrix product of two input Boolean matrices with rows (in the first matrix)
and columns (in the other matrix) within a small diameter $d$ (in
terms of the Hamming distance) in \cite{AC11}. The basic idea is
to compute the inner Boolean product of a representative of the rows
of the first matrix and a representative of columns of the second matrix
and then relatively cheaply recover the entries of the Boolean product
matrix from the inner product. Their algorithm runs in $O(dn^2)$
time for $n\times n$ matrices.
They also presented a generalization of their algorithm to include
grouping the rows of the first matrix as well as the columns of the second
matrix into $k$ clusters of maximum radius within $2$ of the minimum  $s.$
They incorporate Gonzalez's $2$-approximation algorithm for
the $k$-center problem \cite{Gon85} to achieve such a $k$-clustering.
Their general
algorithm runs in $O((k+s)n^2)$ time\footnote{They also claim
  that one can speed up their general algorithm by
  using the fast $2$-approximation algorithm for the $k$-center
  problem due to Feder and Greene \cite{FG88} instead of Gonzalez's one.
  Unfortunately, the former algorithm seems to have rather
  an exponential hidden dependence on the dimension (in this case $n$)
  in view of Lemmata 4.2 and 4.3 in \cite{FG88}
  than a linear one as it is assumed in \cite{AC11}.}.
  
In case of the arithmetic matrix product of 0-1 matrices, in some
cases, a faster approximate arithmetic matrix multiplication can be
more useful \cite{CL99,P13}.  Among other things, it can enable to
identify largest entries in the product matrix and it can
also be used to provide a fast estimation of the number of: the so called
witnesses for the Boolean product of two Boolean matrices
\cite{GKL09}, triangles in a graph, or more generally, subgraphs
isomorphic to a small pattern graph \cite{FKL15} etc.  There is a
number of results on approximate arithmetic matrix multiplication,
where the quality of approximation is expressed in terms of the
Frobenius matrix norm $||\ ||_F$ (i.e., the square root of the sum of
the squares of the entries of the matrix) \cite{CL99,P13}.

Cohen and Lewis \cite{CL99} and Drineas {\em et al.} \cite{DKM06}
used random sampling
  to approximate arithmetic matrix product. Their
  articles provide an approximation $D$
  of the matrix product $AB$ of two $n\times n$ matrices $A$ and $B$ such
  that $||AB-D||_F=O(||AB||_F/\sqrt c)$, for a parameter $c>1$ (see
  also \cite{P13}).
  The approximation algorithm in \cite{DKM06} runs in  $O(n^2c)$ time.
  Drineas et al. \cite{DKM06} also derived bounds on
  the entrywise differences
  between the exact matrix product and its approximation.
  Unfortunately, the best of these
  bounds is $\Omega(M^2n/\sqrt c)$, where $M$ is the maximum value
  of an entry in $A$ and $B.$
  By using a sketch
  technique, Sarl\'os \cite{S06}
  obtained
  the same Frobenius norm guarantees,
also in $O(n^2c)$ time. However, he derived stronger individual
upper bounds on the additive error of each entry $D_{ij}$ of the
approximation matrix $D$. They are of
the form $O(||A_{i*}||_2||B_{*j}||_2/\sqrt c)$,
where $A_{i*}$ and $B_{*j}$ stands for the $i$-th row of $A$
and the $j$-th column of $B,$ respectively, that hold with
high probability.
More recently, Pagh \cite{P13} presented a
randomized approximation $\tilde{O}(n(n+c))$-time
algorithm for the arithmetic product
of $n\times n$ matrices $A$ and $B$ 
such that each entry of the approximate
matrix product differs at most by
$||AB||_F/\sqrt c$ from the correct one. His algorithm first compresses
the matrix product to a product of two polynomials and then uses the
fast Fourier transform  to multiply the polynomials.
Subsequently, Kutzkov \cite{K13} developed analogous deterministic algorithms
employing different techniques. For approximation results related to sparse
arithmetic matrix products, see \cite{IS09,P13}.

\subsection{Our contributions}

In this article, we exploit the possibility of applying the classic simple
$2$-approxi\-mation algorithm for the $k$ center clustering
problem \cite{Gon85} in order to derive efficient and practical
deterministic algorithms for computing an approximate and the exact
arithmetic matrix product of two 0-1 rectangular matrices $A$ and $B$
with clustered rows or columns, respectively. In addition, we consider
the possibility of applying
the recent faster randomized algorithm for $k$ center clustering from
\cite{KLP26} instead of that from \cite{Gon85}.

Most of our results are  based on the idea of using cluster
representatives / centers to compute faster a smaller matrix product and
then relatively cheaply updating the small product to the target one.
This natural idea has been used already in \cite{AC11}\footnote{We
  came across this idea independently without knowing \cite{AC11} before
  writing the preliminary conference version of this
  article.}. Our approach conceptually extends
that in \cite{AC11} in several ways.  We consider the more general
problem of computing the arithmetic matrix product of 0-1 rectangular
matrices while \cite{AC11} is concerned with the Boolean matrix
product of square Boolean matrices. We also provide clustering-based
approximation and preprocessing for a query
on the value of single entry of the aforementioned
arithmetic product; no prior clustering-based
explicit results of this kind seem to be
known in the literature.
In \cite{AC11}, the rows of the first matrix are
grouped in the same number of clusters as the columns of the second matrix,
while in our case the numbers are in general different. Also, in
\cite{AC11} representatives/centers of both row clusters and column
clusters are used to compute the smaller product while we use
in the basic deterministic version
either the row ones or the column ones.
We also provide a faster randomized cluster-based method
  of multiplying 0-1 matrices relying on a fast randomized
  clustering.

The $k$-center clustering problem in a Hamming space $\{0,1\}^d$ is
for a set $P$ of $n$ points in $\{0,1\}^d$ to find a set $T$ of $k$
points in $\{0,1\}^d$ that minimize
$\max_{v\in P} \min_{u\in T} \mathrm{ham}(v,u),$ where
$ \mathrm{ham}(v,u)$ stands for the Hamming distance between $v$ and
$u$ (i.e., the number of coordinates where $v$ and $u$ differ).
Each center in $T$ induces a cluster consisting of all points in $P$ for which
it is the nearest center.

Let $\lambda(A,\ell,row)$ and $\lambda(B,k,col)$ denote the minimum
maximum Hamming distance between a row of $A$ or a column of $B$ and
the closest center in an $\ell$-center clustering of the rows of $A$
or in a $k$-center clustering of the columns of $B,$ respectively.
(Note that in extreme cases, $\lambda(A,\ell,row)$
or $\lambda(B,k,col)$
can even be equal to $0.$) In
particular, when $A$ and $B$ are square matrices of size $n\times n$,
we obtain the following results.
 \begin{enumerate}
 \item
   A simple deterministic algorithm that approximates each entry of
   the arithmetic matrix product of $A$ and $B$ within an additive
   error of at most $2\lambda(A,\ell,row)$ in $O(n^2\ell)$ time or at
   most $2\lambda(B,k,col)$ in $O(n^2k)$ time.
   Similarly, a simple randomized algorithm that for any
   $\epsilon \in (0,\frac 12)$ approximates each entry of the product
   within an additive error of with high probability (w.h.p.) at most
   $(2+\epsilon)(\lambda(A,\ell,row)+\lambda(B,k,col))$ in time
   $O(n^2\log n /\epsilon^2+\ell n k)$.
 \item
   A simple deterministic preprocessing of the matrices $A$ and $B$ in
   $O(n^2\ell)$ time or $O(n^2k)$ time after which every query asking
   for the exact value of an arbitrary entry of the arithmetic matrix
   product of $A$ and $B$ can be answered in
   $O(\lambda(A,\ell,row)+1)$
   time or
   $O(\lambda(B,k,col)+1)$ time, respectively.  Similarly, a
   simple randomized preprocessing of the matrices in time
   $O(n^2\log n + \ell n k)$ after which every query asking
   for the exact value of an arbitrary entry of the product of the
   matrices can be answered in
   $O(\lambda(A,\ell,row)+\lambda(B,k,col)+1)$ time w.h.p.
        \item
       Simple deterministic algorithms for the exact arithmetic
       matrix product of $A$ and $B$ running in time
       $O(n^2\min\{\ell +\lambda(A,\ell,row),k+\lambda(B,k,col)\})$.
       Alternatively, a simple randomized algorithm for the exact
       product running in time $O(n^2\log n+ \ell n k
       +n^2(\lambda(A,\ell,row)+\lambda(B,k,col))$ w.h.p.
\end{enumerate}
\junk{
  Our simple deterministic algorithms for the approximation, preprocessing,
  answering queries about single entries, and the exact arithmetic
  product of 0-1 matrices rely on Gonzalez's $2$-approximate
  algorithm for the $k$-center problem and the idea of using
  representatives of row and/or column clusters in order
  to compute a smaller matrix product and then recover the values
  of the entries in the target matrix product, already used
  for square Boolean matrices in \cite{}.}
In the fully symmetric case, i.e., when
  the matrices $A,\ B$ are square,
  $\ell=k,$ and $\lambda(A,\ell,row)=\lambda(B,k,col),$ the asymptotic
  running times of our deterministic algorithms for the arithmetic
  matrix product of 0-1 matrices $A,\ B$ coincide with the
  asymptotic running
  time of the algorithm for the Boolean matrix product presented in
  \cite{AC11}.

\subsection{Techniques}

All our main deterministic results rely on
the classical, simple $2$-approximation algorithm for
the $k$-center clustering problem (farthest-point clustering) due to
Gonzalez \cite{Gon85}, whose properties are summarized in Fact \ref{fact: gon}
below.
Our results also rely on the idea of updating
the inner product of two vectors $a$ and $b$ in $\{0,1\}^q$
over the Boolean or an arithmetic semi-ring
to that of two vectors $a'$ and $b'$ in $\{0,1\}^q$,
in time roughly
proportional to $\mathrm{ham}(a,a')+\mathrm{ham}(b,b')$.
The idea has been used
in \cite{AC11,BL,FJLL18,GL03}.
As in some of the aforementioned articles, in case
of our alternative deterministic algorithm
for 0-1 matrix multiplication, we combine it
with a traversal of an approximate minimum spanning tree of the rows or
columns of an input matrix in the Hamming space $\{0,1\}^q$,
where $q$ is the length of the rows or columns
(see Lemma \ref{lem: mmst}).

\subsection{Article organization}

The next section contains basic definitions.
Section 3 presents a
hybrid algorithm for the arithmetic matrix product of 0-1 matrices
combing clustering of the rows or columns of the input matrices with
the technique of traversing an approximate minimum spanning tree of
the rows or columns, respectively. Section 4 is devoted to our
approximation algorithms for the arithmetic product of two 0-1
matrices. Section~5 presents preprocessing procedures enabling efficient answers
to queries asking for the value of an arbitrary entry of the
arithmetic product matrix. As 
a corollary, an upper time bound on
the construction of the exact arithmetic matrix product of 0-1
matrices follows, it matches that yielded by the hybrid
algorithm from Section 3. We conclude with a short discussion on
potential extensions of our results.


\section{Preliminaries}

For a positive integer $r$, $[r]$ stands for the set of positive integers not exceeding $r.$

For two vectors $(a_1,\dots,a_d)$ and $(b_1,\dots,b_d)$ in $\mathbb{R}^d,$
their {\em inner product} is equal to $\sum_{\ell=1}^d a_{\ell}b_{\ell}.$
The transpose of a matrix $D$ is denoted by $D^{\top}.$
If the entries of $D$ are in $\{0,1\}$ then
$D$ is a 0-1 matrix.
\junk{
The symbol $\omega$ denotes the smallest real number such that two $n\times n$
matrices can be multiplied using $O(n^{\omega +\epsilon})$
operations over the field of reals, for all $\epsilon > 0.$}

The {\em Hamming distance} between two points $a,\ b$ (vectors) in $\{0,1\}^d$
is the number of coordinates in which the two points differ.
Alternatively, it can be defined as the distance between $a$ and $b$ in
the $L_1$ metric over $\{0,1\}^d.$
It is denoted by $\mathrm{ham}(a,b).$

An event is said to hold {\em with high probability} (w.h.p. for short) in
terms of a parameter $N$ related to the input size if it holds with
probability at least $1-\frac 1 {N^{\alpha}},$ for any constant
$\alpha $ not less than $1.$

The \emph{$k$-center clustering problem in a Hamming space} is as follows:
Given a set $P$ of $n$ points in a Hamming space $\{0,1\}^d$, find a set $T$ of
$k$ points in $\{0,1\}^d$ that minimize
$\max_{v\in P} \min_{u\in T} \mathrm{ham}(v,u).$
It is known to be NP-hard already for $k = 1$ \cite{FL97}
and also NP-hard to approximate within a ratio of $2-\epsilon$
for any constant $\epsilon >0$ when $k$ is part of the input \cite{FG88,GJL04}.
Note that according to the problem definition, the points in $T$ do not have to
belong to the set $P$.
The \emph{minimum-diameter $k$-clustering problem in a Hamming space} is:
Given a set $P$ of $n$ points in a Hamming space $\{0,1\}^d$, find a partition
of $P$ into $k$ subsets $P_1,P_2,\dots, P_{k}$ that minimize
$\max_{i\in [k]} \max_{v,u \in P_i} \mathrm{ham}(v,u).$
The $k$-center clustering problem can also be termed
the minimum-radius $k$-clustering problem.

\begin{fact}\cite{Gon85}\label{fact: gon}
(Gonzalez's algorithm)
   There is a simple deterministic $2$-approxi\-mation algorithm for
   the $k$-center clustering and minimum-diameter $k$-clustering problems
   in a $d$-dimensional Hamming space running in $O(ndk)$ time.
   Moreover, the approximate solution returned by Gonzalez's algorithm consists
   entirely of points from the input set~$P$.
\end{fact}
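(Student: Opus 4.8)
The plan is to reconstruct Gonzalez's farthest-point clustering algorithm and verify the three claimed properties. The algorithm grows a center set $T$ as follows: pick any $p_1\in P$ arbitrarily, and for $i=2,\dots,k$ add to $T$ the point $p_i\in P$ whose distance to the current $T$ is largest, where the distance of a point $v$ to a set is $\min_{u\in T}\mathrm{ham}(v,u)$. Since every chosen point is taken from $P$, the last assertion (centers belong to $P$) is immediate. For the running time I would maintain, for every $v\in P$, a value $\mathrm{dist}(v)=\min_{u\in T}\mathrm{ham}(v,u)$; each time a new center $p_i$ is added I update $\mathrm{dist}(v)\leftarrow\min\{\mathrm{dist}(v),\mathrm{ham}(v,p_i)\}$ over all $v$. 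One Hamming evaluation costs $O(d)$ and there are $n$ points, so each of the $k$ iterations costs $O(nd)$ (the farthest point is then read off in $O(n)$ time), giving the claimed $O(ndk)$ bound.

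The heart of the argument is a single separation lemma. Let $r=\max_{v\in P}\min_{u\in T}\mathrm{ham}(v,u)$ be the radius achieved by the algorithm, and let $p_{k+1}$ be the point of $P$ realizing this maximum (equivalently, the point that would be selected next). I would first show that the $k+1$ points $p_1,\dots,p_{k+1}$ are pairwise at Hamming distance at least $r$. This holds because when $p_j$ was chosen it was farthest from $\{p_1,\dots,p_{j-1}\}$, so for $i<j$ we have $\mathrm{ham}(p_i,p_j)\geq\min_{\ell<j}\mathrm{ham}(p_j,p_\ell)$, and the latter equals the radius attained with $j-1\leq k$ centers; since the attained radius is nonincreasing in the number of centers, that value is at least $r$.

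With this lemma both bounds follow by pigeonhole and the triangle inequality for $\mathrm{ham}$ (which is a metric, being the $L_1$ distance on $\{0,1\}^d$). For the $k$-center problem, let $r^*$ be the optimal radius and fix an optimal set of $k$ centers; assigning each $p_i$ to its nearest optimal center forces two of the $k+1$ points, say $p_i$ and $p_j$, to share a common optimal center $c$, whence $r\leq\mathrm{ham}(p_i,p_j)\leq\mathrm{ham}(p_i,c)+\mathrm{ham}(c,p_j)\leq 2r^*$. For the minimum-diameter problem, let $D^*$ be the optimal diameter and fix an optimal $k$-partition; again two of the $k+1$ points land in the same part, so $r\leq\mathrm{ham}(p_i,p_j)\leq D^*$, while the clustering induced by $T$ via nearest-center assignment has diameter at most $2r$ (two points sharing a center $c\in T$ are each within $r$ of $c$), yielding diameter $\leq 2r\leq 2D^*$.

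The main obstacle is establishing the separation lemma correctly, and in particular the monotonicity claim that the attained radius does not increase as centers are added; everything afterward is routine bookkeeping together with two applications of the triangle inequality. The subtlety worth checking is that monotonicity plus the greedy selection rule together guarantee $\mathrm{ham}(p_i,p_j)\geq r$ even for the earliest centers, which is precisely what the pigeonhole step over the $k+1$ points requires.
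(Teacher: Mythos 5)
Your reconstruction is correct and complete: the separation lemma (the $k+1$ greedily chosen points are pairwise at distance at least the achieved radius $r$, via monotonicity of the attained radius), the pigeonhole-plus-triangle-inequality step for both the radius and diameter objectives, and the $O(ndk)$ time bound via maintained nearest-center distances are exactly the standard analysis of Gonzalez's farthest-point clustering. The paper states this as an imported fact from the cited reference without proof, so there is nothing to compare against beyond noting that your argument is the one from that source.
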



We will use the last property given in Fact~\ref{fact: gon} (that is,
the approximate solutions found by Gonzalez's algorithm consist of points from
the input only) in the proof of Theorem~\ref{theo: mmstclus}.

\medskip

A combination of a variant of randomized dimension reduction in Euclidean
spaces in the framework of Johnson and Lindenstrauss \cite{JL84} given by
Achlioptas in \cite{A03} and the observation that the Hamming distance
between two 0-1 vectors is equal to their squared $L_2$ distance makes it
possible to decrease the dimension to a logarithmic one in the number of
input points, at the cost of worsening the approximation guarantee by
some arbitrarily small $\epsilon$.
Taking into account the time needed to compute the images of the input points
in the subspace of logarithmic dimension,
this yields the following fact
shown in \cite{KLP26}.
 \begin{fact} \cite{KLP26} \label{fact: 2plus}
   For any fixed $\epsilon\in (0,1/2)$,
   there is a randomized algorithm for
   the $k$-center clustering problem
   in a Hamming space running in $O(n\log n(d + k)/\epsilon^2)$ time
   that computes a $(2+\epsilon)$-approximation
   of an optimal $k$-center clustering w.h.p.
 \end{fact}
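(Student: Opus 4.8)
The plan is to reduce the dimension of the $n$ input points from $d$ down to $O(\log n/\epsilon^2)$ by a Johnson--Lindenstrauss-type random projection, run the classical $2$-approximation of Fact~\ref{fact: gon} (Gonzalez's algorithm) in the resulting low-dimensional Euclidean space, and then argue that the small multiplicative distortion introduced by the projection degrades the approximation ratio only from $2$ to $2+O(\epsilon)$.

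First I would use the fact that for two 0-1 vectors $a,b\in\{0,1\}^d$ one has $\mathrm{ham}(a,b)=\|a-b\|_2^2$, so that Hamming distances on 0-1 vectors coincide with squared Euclidean distances. Then I would apply Achlioptas' variant \cite{A03} of the Johnson--Lindenstrauss lemma \cite{JL84}: a suitable random linear map $f$ onto $d'=O(\log n/\epsilon^2)$ coordinates preserves, with high probability, the squared Euclidean distance of each of the $\binom{n}{2}$ pairs of input points up to a factor $1\pm\epsilon$. Hence the quantity $\rho(p,q):=\|f(p)-f(q)\|_2^2$ satisfies $(1-\epsilon)\,\mathrm{ham}(p,q)\le\rho(p,q)\le(1+\epsilon)\,\mathrm{ham}(p,q)$ for all input pairs simultaneously.

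Next I would run Gonzalez's farthest-point algorithm in $\mathbb{R}^{d'}$ using $\rho$ as the distance, which by Fact~\ref{fact: gon} costs $O(nd'k)=O(nk\log n/\epsilon^2)$ time and returns $k$ centers that are themselves input points, together with the attained $\rho$-radius $R'$. The correctness argument then has two halves. For the upper bound, every input point lies within $\rho$-distance $R'$ of some returned center, so its \emph{true} Hamming distance to that center is at most $R'/(1-\epsilon)$, and thus the true Hamming radius of the computed clustering is at most $R'/(1-\epsilon)$. For the lower bound I would use the standard pigeonhole argument: the $k+1$ points selected by the greedy rule (the $k$ centers together with the point realizing $R'$) are pairwise at $\rho$-distance at least $R'$, hence at Hamming distance at least $R'/(1+\epsilon)$; in any optimal $k$-clustering two of them share a cluster, and the triangle inequality applied in the \emph{genuine} Hamming metric bounds their distance by $2\,\mathrm{OPT}$, giving $R'/(1+\epsilon)\le 2\,\mathrm{OPT}$. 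Combining the two halves yields a true Hamming radius of at most $\frac{2(1+\epsilon)}{1-\epsilon}\,\mathrm{OPT}=(2+O(\epsilon))\,\mathrm{OPT}$ for $\epsilon\in(0,\tfrac12)$, and rescaling $\epsilon$ by a constant factor turns this into the claimed $(2+\epsilon)$ ratio.

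The step I expect to be the main obstacle is precisely that $\rho$, being a squared Euclidean distance, is \emph{not} a metric, so the triangle inequality may fail for $\rho$ itself; the way around this is to let the greedy algorithm operate entirely on $\rho$ while carrying out every triangle-inequality step of the approximation proof in the true Hamming metric, using the two-sided distortion bound only to translate between $\rho$ and $\mathrm{ham}$. It remains to account for the running time: forming the $n$ images $f(p)$ with the projection map costs $O(ndd')=O(nd\log n/\epsilon^2)$ time, which together with the low-dimensional Gonzalez run gives the claimed total of $O(n\log n(d+k)/\epsilon^2)$.
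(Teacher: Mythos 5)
Your proposal is correct and follows essentially the same route the paper indicates for this fact: exploit $\mathrm{ham}(a,b)=\|a-b\|_2^2$ on 0-1 vectors, apply an Achlioptas-style Johnson--Lindenstrauss projection to dimension $O(\log n/\epsilon^2)$, run Gonzalez's greedy algorithm (Fact~\ref{fact: gon}) on the projected distances, and transfer the radius bounds back to the Hamming metric, losing only a $\frac{1+\epsilon}{1-\epsilon}$ factor; your care in carrying out the triangle-inequality steps in the genuine Hamming metric rather than in the (non-metric) squared projected distance is exactly the right fix, and the time accounting matches the claimed $O(n\log n(d+k)/\epsilon^2)$ bound. The paper itself only sketches this argument and defers the full details to the cited reference, so there is nothing substantively different to compare against.
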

   
 \section{A hybrid algorithm for multiplication of 0-1 matrices}

   In this section, we present a simple deterministic algorithm for the
 arithmetic matrix product of two 0-1 matrices.  It combines
 an approximate $\ell$-center clustering of the rows of the first matrix
 or an approximate $k$-center clustering of the columns of the second
 matrix with the aforementioned technique of traversing an approximate
 minimum spanning tree of the rows of the first matrix or the columns
 of the second matrix in an appropriate Hamming space in order to
 compute a row or column of the product matrix \cite{BL,FJLL18,GL03}.
 We generalize the technique to include rectangular matrices.
 The clustering is used for a fast deterministic construction of an
 approximate minimum spanning tree. The prior algorithms based on the
 traversing technique use a fast randomized construction of an
 appropriate approximate minimum spanning tree \cite{BL,FJLL18,GL03}.

 We shall use the following procedure and lemma in the spirit of
 \cite{BL,FJLL18,GL03}.

\bigskip
\noindent
\fbox{
\parbox{\boxwidth}{  
\smallskip
{\bf procedure $MMCLUS$-$ST(A,B,T)$}
\par
\smallskip
\noindent
    {\em Input:} Two matrices $A$ and $B$ of sizes $p\times q$ and $q\times r$,
    respectively, and a spanning tree $T$ of the rows of $A$ in the
Hamming space $\{0,1\}^q.$
\par
\smallskip
\noindent
    {\em Output:} The arithmetic matrix product $C$ of $A$ and $B$.
    \par
\noindent
\begin{enumerate}
\item
 Construct a traversal 
(i.e., a non-necessarily simple path visiting all vertices) $U$ of $T.$
\item For each pair composed of $m$-th row $A_{m*}$ of $A$
  and $i$-th row $A_{i*}$ of $A,$
where the latter row follows the former in
the traversal $U,$ compute the set $diff(m,i)$
of indices $h \in [q]$ where $A_{ih}\neq A_{mh}$.
\item
For $j=1,\ldots,r$ do:
\begin{enumerate}
\item Compute $C_{sj}$ where $A_{s*}$ 
is the row of $A$ from which the traversal $U$ of $T$ starts.
\item While following $U$, do:
\begin{enumerate}
\item
Set $m,\ i$ to the indices
of the previously traversed row of $A$ and
the currently traversed row of $A$, respectively.
\item
Set $C_{ij}$ to $C_{mj}$.
\item For each $h \in diff(m,i)$,
  if $A_{ih}B_{h j}=1$ then set
  $C_{ij}$ to $C_{ij}+1$ and if  $A_{mh}B_{hj}=1$ then set
  $C_{ij}$ to $C_{ij}-1$.
\end{enumerate} 
\end{enumerate} 
\end{enumerate}
}}
\bigskip

Define the Hamming cost $\mathrm{ham}(S)$ of a spanning tree $S$
of a point set $P \subset \{0,1\}^d$ by
$\mathrm{ham}(S)=\sum_{(v,u)\in S} \mathrm{ham}(v,u).$

\begin{lemma}\label{lem: mmst}
Let $A$ and $B$ be two 0-1 matrices of sizes $p\times q$ and
 $q\times r,$ respectively. Given a spanning tree $T_A$ of the rows
 of $A$  in the Hamming space $\{0,1\}^q$,
   $MMCLUS$-$ST(A,B,T_A)$ computes
  the arithmetic matrix product of $A$ and $B$
in time $O(pq+qr+pr+r\times \mathrm{ham}(T_A)).$ 
\end{lemma} 
\begin{proof}
  The correctness of $MMCLUS$-$ST(A,B,T_A)$
  follows from the correctness of
  the updates of $C_{ij}$ in the block of the inner loop, i.e.,
  in Step~3(b).
  Step 1 of
  $MMCLUS$-$ST(A,B,T_A)$
  can be done in $O(p)$ time 
while Step 2 requires
  $O(pq)$ time. The first step in the block under
  the outer loop, i.e., computing
  $C_{sj}$ in Step 3(a), takes $O(q)$ time.
  The crucial observation is that the second step
  in this block, i.e., Step 3(b), requires $O(p+\mathrm{ham}(T_A))$ time.
  Simply, the substeps (i), (ii) take $O(1)$ time while
  the substep (iii) requires $O(|diff(m,i)|+1)$ time.
  Since the block is iterated $r$ times, the whole
  outer loop, i.e., Step 3,
  requires $O(qr+pr+r\mathrm{ham}(T_A))$ time.
  Thus,
  $MMCLUS$-$ST(A,B,T_A)$
  can be implemented in time
  $O(pq+qr+rp+r\times \mathrm{ham}(T_A))$.
\junk{
  Similarly, we can run
  $MMCLUS$-$ST(B^{\top},A^{\top},T_B)$
  to obtain the transpose of the arithmetic matrix
  product of $A$ and $B.$  So, to obtain the lemma,
  we can compute $\mathrm{ham}(T_A)$ and $\mathrm{ham}(T_B)$
  in $O(pq+qr)$ time in order to run  whichever 
  of $MMCLUS$-$ST(A,B,T_A)$ and $MMCLUS$-$ST(B^{\top},A^{\top},T_B)$
  has a smaller upper time bound.}
\qed
\end{proof}

\begin{figure}[t]
 \begin{center}
   \includegraphics[width=0.9\textwidth]{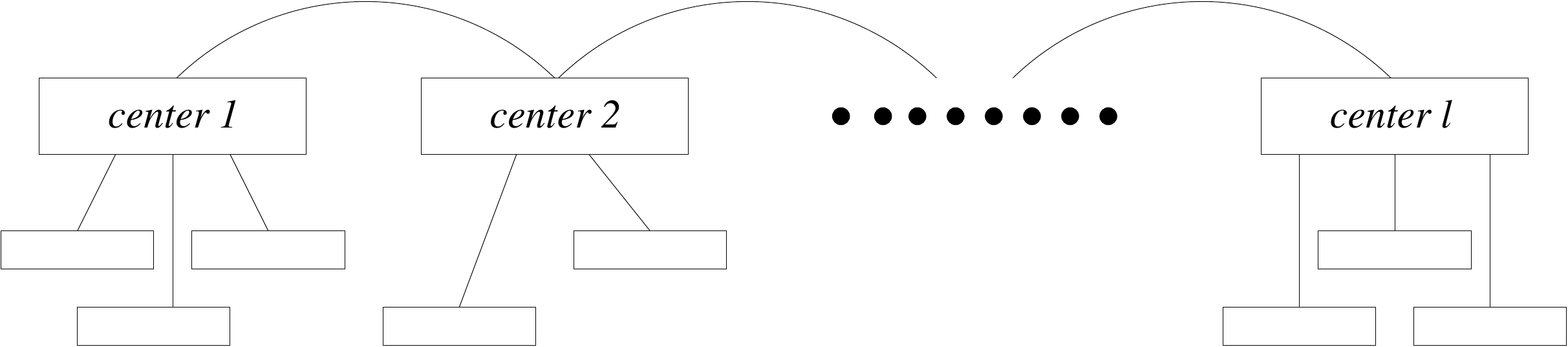}
\end{center}
\caption{An example of the spanning tree $T_A$ of the rows of $A$.}
\label{fig: sp}
\end{figure}

\begin{theorem}\label{theo: mmstclus}
  Let $A$ and $B$ be two 0-1 matrices of sizes $p\times q$ and
  $q\times r,$ respectively. Given parameters $\ell \in [p]$
and $k\in [r],$ the arithmetic matrix product of $A$ and $B$
can be computed by a simple deterministic algorithm in time
$O(pr+\min\{pq\ell+rq\ell +pr\times \lambda(A,\ell,row),
rqk+pqk+pr\times \lambda(B,k,col)\}).$
\end{theorem}
\begin{proof}
We can determine an $\ell$-center clustering of the rows of
$A$ in $\{0,1\}^q$ of maximum cluster radius not exceeding
$2\lambda(A,\ell,row)$ in $O(pq\ell)$ time by employing
Fact \ref{fact: gon}. Similarly, we
can construct a $k$-center clustering of the columns of $B$
in $\{0,1\}^q$ of maximum cluster radius not exceeding
$2\lambda(B,k,col)$ in $O(rqk)$ time.
The centers in both aforementioned clusterings are some rows
of $A$ and some columns of $B$, respectively, by
the last part of Fact~\ref{fact: gon}.
Hence, the $\ell$-center clustering gives rise to a spanning tree $T_A$
of the rows of $A$ with all members of a cluster being pendants
of their cluster center and the centers connected by
a spanning tree of the centers, e.g., a path of length
$\ell-1,$ see Fig. \ref{fig: sp}.
The Hamming cost of $T_A$ is at most $(p-\ell)2\lambda(A,\ell,row)+
(\ell -1)q$.
  \footnote{In case the minimum Hamming cost of a spanning tree
  of the centers was substantially smaller than $(\ell -1)q$, one could
  reduce the term  $rq\ell$ in the upper bound, at the cost of
  increasing the running time of the algorithm
  by that needed for the construction of a nearly optimal
spanning tree of the centers.}
Similarly, we can obtain a spanning tree $T_B$ of
the columns of $B$ having the Hamming cost not exceeding
$(r-k)2\lambda(B,k,col) + (k-1)q.$
Note that the arithmetic matrix product of $B^{\top}$ and $A^{\top}$
is equal to the transpose of the arithmetic matrix product of
$A$ and $B.$  Hence,
to obtain the theorem, we can alternate the steps of
$MMCLUS$-$ST(A,B,T_A)$ with the steps of
$MMCLUS$-$ST(B^{\top},A^{\top},T_B)$ and stop whenever
any of the calls is completed. Note also that a $k$-clustering of
columns of $B$ is equivalent to
a $k$-clustering of the rows
of $B^{\top}$ and hence $\lambda(B^{\top},k,row)=\lambda(B,k,col).$
The theorem
follows from Lemma \ref{lem: mmst} by
substituting the upper bounds on the Hamming cost
of $T_A$ and $T_B$ for $\mathrm{ham}(T_A)$ and $\mathrm{ham}(T_B)$
and adding the asymptotic running time of
Gonzalez's algorithm for $\ell$-center and $k$-center
clusterings, respectively.
\qed
\end{proof}

 \section{An approximate arithmetic matrix product of 0-1 matrices}

 In this section we present two algorithms for approximate arithmetic
 matrix product of 0-1 matrices. The first is deterministic while the
 second is randomized. The general idea is to find $\ell$
 approximate centers of the rows of the first matrix or/and
 $k$ approximate centers of the columns of the second matrix
 and then use the centers instead of the rows or/and the columns
 while computing a matrix product. The latter product is a base
 for the approximate matrix product. A similar idea has been used
 in \cite{AC11} to compute the exact Boolean matrix product
 of square Boolean matrices.
 \junk{
 A similar idea has been used
 in \cite{AC11} to compute the Boolean matrix product
 of two Boolean matrices.
 The algorithms are the base of preprocessing
 enabling efficient answers to queries asking for the exact value of a
 single entry in the matrix product. The preprocessings and simple
 algorithms for answering the queries are presented in
 the next section.}
 \junk{
 The idea of  the procedure is very simple.
 First, it runs an approximate algorithm for the $\ell$-center
 problem on the rows of $A.$ 
 Next, it forms a smaller matrix $A'$
 whose $\ell$ rows are the computed centers
 of the rows of $A$. Then, it computes the arithmetic matrix product
 $D$ of the matrices $A'$ and $B.$}


Our deterministic
approximation algorithm for the arithmetic matrix product of two
0-1 matrices is specified by the following procedure.

\bigskip
\noindent
\fbox{
\parbox{\boxwidth}{  
\smallskip
        {\bf procedure} $MMCLUS$-$Approx(A,B, \ell)$
        \par
\smallskip
        \noindent
            {\em Input}: Two 0-1 matrices $A$ and $B$  of sizes $p\times q$ and $q\times r,$
            respectively,
            and a positive integer $\ell$
            not exceeding $p$.
            \par
\smallskip
            \noindent
                {\em Output}: A $p\times r$ matrix $D$, where
                for $1\le i\le p$ and $1\le j\le r,$ $D_{ij}$
                is an approximation of the inner product $C_{ij}$ of
                the $i$-th row $A_{i*}$ of $A$  and the $j$-th column
                $B_{*j}$ of $B.$
                \begin{enumerate}
                \item
                  \label{step: Center}
                                    Determine an approximate solution
                                    to the $\ell$-center clustering
                  of the rows of the matrix $A$ in $\{ 0,1\}^q.$
                  For each row $A_{i*}$ of $A$, set
                  $cen_{\ell}(A_{i*})$ to its closest center,
                  with ties broken arbitrarily.
                \item
                  \label{step: Form_A'}
                  Form the $\ell \times q$ matrix $A',$ where
                  the $i'$-th row is the $i'$-th center in
                  the approximate $\ell$-center clustering
                  of the rows of $A.$
                 \item
                   \label{step: Compute_C'}
                   Compute the arithmetic $\ell \times r$ matrix  product
                   $C'$ of $A'$ and $B.$ 
                 \item
                   \label{step: Compute_D}
                   For $1\le i\le p$ and $1\le j \le r,$ set
                   $D_{ij}$ to $C'_{i'j}$, where the $i'$-th row
                   $A'_{i'*}$ of $A'$ is $cen_{\ell}(A_{i*}).$
                \end{enumerate}
}}
\bigskip

The idea of  the procedure is very simple.
 First, it runs an approximate algorithm for the $\ell$-center
 problem on the rows of $A$ in step~\ref{step: Center}. 
 Next, it forms a smaller matrix $A'$
 whose $\ell$ rows are the computed centers
 of the rows of $A$ in step~\ref{step: Form_A'}.
 Then, it computes the arithmetic matrix product
 $C'$ of the matrices $A'$ and $B$ in step~\ref{step: Compute_C'},              see Fig.~\ref{fig: fig_matrix_multiplication}.
%
After having computed~$C'$, the procedure sets each entry~$D_{ij}$ of
the approximate matrix~$D$ to the entry of~$C'$ holding the inner product of
the center closest to the $i$-th row with the $j$-th column of~$B$ in
step~\ref{step: Compute_D}.

\begin{figure}[t]
\centering
\includegraphics[width=0.9\textwidth]{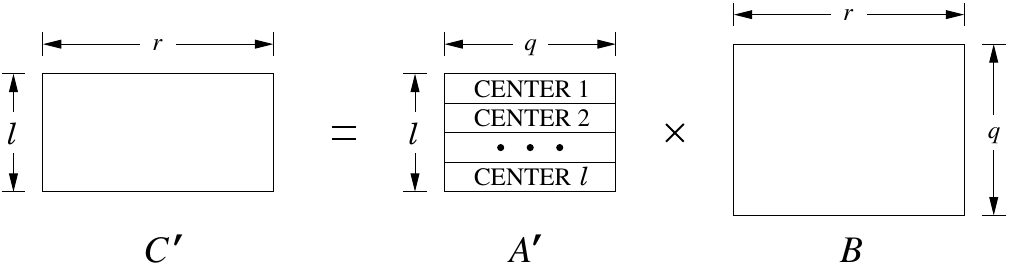}
\medskip
\caption{Computing the matrix~$C'$ in step~\ref{step: Compute_C'} of
$MMCLUS$-$Approx(A,B, \ell)$.}
\label{fig: fig_matrix_multiplication}
\end{figure}

                 For a 0-1 $p\times q$ matrix $A$, let
                 $\lambda(A,\ell,row)$ be the minimum, over all
               $\ell$-center clusterings of the rows of $A$ in the
                 Hamming space $\{0,1\}^q$, of the maximum Hamming
                 distance between a row of $A$ and the closest center.
                 Similarly, for a 0-1 $q\times r$ matrix
                 $B,$ let $\lambda(B,k,col)$ be the minimum, over all
                 $k$-center clusterings of the columns of $B$ in the
                 Hamming space $\{0,1\}^q$, of the maximum Hamming
                 distance between a column of $B$
                 and the closest center.
                 
                  \begin{lemma}\label{lem: appapp}
                   Suppose a $2$-approximation algorithm for
                   the $\ell$-center clustering
                   is used in $MMCLUS$-$Approx(A,B, \ell)$
                   and let $C$ stand for the arithmetic product of $A$
                   and $B.$ Then, for  $1\le i\le p$ and $1\le j \le r,$
                   $|C_{ij}-D_{ij}|\le 2\lambda(A,\ell,row)$.
                   
 \end{lemma}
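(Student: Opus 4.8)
The plan is to reduce the entrywise error to a single Hamming distance and then propagate the factor $2$ from the clustering guarantee. First I would fix indices $i\in[p]$ and $j\in[r]$ and unfold the two quantities being compared. By definition $C_{ij}=\sum_{t=1}^{q}A_{it}B_{tj}$. Writing $c=cen_{\ell}(A_{i*})$ for the center assigned to the $i$-th row, step~\ref{step: Compute_C'} computes $C'=A'B$ exactly, and step~\ref{step: Compute_D} sets $D_{ij}=C'_{i'j}$ where the $i'$-th row of $A'$ equals $c$; hence $D_{ij}=\sum_{t=1}^{q}c_{t}B_{tj}$. Subtracting gives
\[
C_{ij}-D_{ij}=\sum_{t=1}^{q}\bigl(A_{it}-c_{t}\bigr)B_{tj}.
\]

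The key step is to bound this sum using the $0$-$1$ structure of $B$. Each summand $(A_{it}-c_{t})B_{tj}$ is nonzero only at coordinates $t$ where simultaneously $A_{it}\ne c_{t}$ and $B_{tj}=1$; since all entries lie in $\{0,1\}$, every such nonzero summand equals $+1$ or $-1$. Consequently the number of nonzero summands, and therefore $|C_{ij}-D_{ij}|$, is at most the number of coordinates in which $A_{i*}$ and $c$ differ, i.e.
\[
|C_{ij}-D_{ij}|\le \mathrm{ham}(A_{i*},cen_{\ell}(A_{i*})).
\]
I would emphasize here that it is precisely the $0$-$1$ (rather than general real) entries of $B$ that keep the error bounded by the Hamming distance instead of something larger; this is the conceptual crux, even though the term-by-term estimate itself is routine.

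Finally I would invoke the approximation guarantee. Since step~1 uses a $2$-approximation algorithm for $\ell$-center clustering, the maximum over all rows of $A$ of the Hamming distance to the assigned center is at most twice the optimal value of the $\ell$-center objective, which by definition is $\lambda(A,\ell,row)$. Thus $\mathrm{ham}(A_{i*},cen_{\ell}(A_{i*}))\le 2\lambda(A,\ell,row)$ for every $i$, and combining this with the previous inequality yields $|C_{ij}-D_{ij}|\le 2\lambda(A,\ell,row)$ for all $i,j$, as claimed. I do not anticipate a genuine obstacle: the only care needed is in the middle step, making sure the $\{-1,0,1\}$ summands are counted against the Hamming distance correctly; the passage from the clustering error to the matrix-entry error then follows directly from the definition of $\lambda(A,\ell,row)$ and the $2$-approximation property.
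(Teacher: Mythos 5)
Your proof is correct and follows essentially the same route as the paper's: identify $D_{ij}$ as the inner product of $cen_{\ell}(A_{i*})$ with $B_{*j}$, bound $|C_{ij}-D_{ij}|$ by $\mathrm{ham}(A_{i*},cen_{\ell}(A_{i*}))$ using the 0-1 entries, and apply the $2$-approximation guarantee. The only difference is that you spell out the term-by-term $\{-1,0,1\}$ accounting that the paper leaves implicit.
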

 \begin{proof} 
   For  $1\le i\le p$ and $1\le j \le r,$  $D_{ij}$ is the inner product
   of $cen_{\ell}(A_{i*})$, where $\mathrm{ham}(A_{i*},cen_{\ell}(A_{i*}))\le
   2\lambda(A,\ell,row)$, with $B_{*j}.$
 Hence, $C_{ij}$, which is
   the inner product of $A_{i*}$ with $B_{*j},$ can differ at most by
   $2\lambda(A,\ell,row)$ from $D_{ij}.$
 \end{proof}

 By $T(s,q,t)$, we shall denote the worst-case time taken by the
multiplication of two 0-1 matrices of sizes $s \times q$ and
 $q\times t,$ respectively.
 
 \begin{lemma}\label{lem: apptime}
   $MMCLUS$-$Approx(A,B, \ell)$
   runs in $O(pq\ell +pr+T(\ell,q,r))$ time.
     \end{lemma}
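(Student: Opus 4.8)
The plan is to bound the running time by summing the costs of the four enumerated steps of $MMCLUS$-$Approx(A,B,\ell)$, matching the three additive terms $O(pq\ell)$, $O(pr)$, and $T(\ell,q,r)$ of the claimed bound to distinct stages of the procedure.

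First I would handle Step~1. Since it performs an $\ell$-center clustering of the $p$ rows of $A$, viewed as points in the Hamming space $\{0,1\}^q$, I would invoke Gonzalez's algorithm (Fact~\ref{fact: gon}) with parameters $n=p$, $d=q$, and $k=\ell$, giving an $O(pq\ell)$ bound. I would fold the closest-center assignment into the same bound: computing $cen_{\ell}(A_{i*})$ for every row amounts to comparing $p$ rows against $\ell$ centers, each Hamming-distance evaluation costing $O(q)$, again $O(pq\ell)$ in total. This accounts for the first term, and I would record the index of the assigned center for each row in an array for later reuse.

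Next, I would observe that Step~2 only copies the $\ell$ selected centers into the rows of $A'$ at cost $O(\ell q)$, which is dominated by $O(pq\ell)$ since $\ell\ge 1$. Step~3 computes the product of the $\ell\times q$ matrix $A'$ and the $q\times r$ matrix $B$, costing exactly $T(\ell,q,r)$ by definition of $T$. Step~4 fills the $pr$ entries of $D$ by, for each pair $(i,j)$, retrieving the index $i'$ of the center assigned to row $i$ (recorded in Step~1) and copying $C'_{i'j}$ in constant time, for a total of $O(pr)$. Adding the four contributions yields $O(pq\ell)+O(\ell q)+T(\ell,q,r)+O(pr)=O(pq\ell+pr+T(\ell,q,r))$, as claimed.

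I expect no genuine obstacle; the only points requiring care are confirming that the closest-center assignment does not exceed the clustering cost---in particular that caching the assignment in Step~1 lets Step~4 run in $O(pr)$ rather than the naive $O(prq)$ that would result from recomputing nearest centers---and that the Hamming-space parameters substituted into Fact~\ref{fact: gon} correctly identify the dimension as the row length $q$ and the number of centers as $\ell$.
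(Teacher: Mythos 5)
Your proposal is correct and follows essentially the same step-by-step accounting as the paper's own proof: $O(pq\ell)$ for the clustering and center assignment via Fact~\ref{fact: gon}, $O(\ell q)$ for forming $A'$, $T(\ell,q,r)$ for the small product, and $O(pr)$ for filling $D$ from cached center indices. The only cosmetic difference is that you absorb the $O(\ell q)$ term into $O(pq\ell)$ whereas the paper absorbs it into $O(T(\ell,q,r))$; both are valid.
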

  \begin{proof}
   Step 1, which includes the assignment of the closest center
   to each row of $A$, can be done in $O(pq\ell)$ time by using Fact
   \ref{fact: gon}.
   Step 2 takes $O(\ell q)$ time, which is $O(T(\ell,q,r))$ time.
   Finally, Step 3 takes $T(\ell,q,r)$ time while Step 4
   can be done in $O(pr)$ time. Thus, the overall time is
   $O(pq\ell +pr +T(\ell,q,r)).$
 \end{proof}

 We can use the straightforward $O(sqt)$-time algorithm for the
 multiplication of two matrices of sizes $s\times q$ and
 $q\times t,$ respectively.
 Since $T(\ell,q,r) = O(\ell q r)$,
 Lemmata \ref{lem: appapp} and
\ref{lem: apptime} yield
the first variant (1) of our first main
result (Theorem~\ref{theo: app} below).
The second variant (2) 
follows from the first one by $(AB)^{\top}=B^{\top}A^{\top}.$
Simply, we run
$MMCLUS$-$Approx(B^{\top},A^{\top},k)$
to compute an approximation of the
transpose of the arithmetic matrix product of
$A$ and $B.$ Note also that a $k$-clustering of
columns of $B$ is equivalent to
a $k$-clustering of the rows
of $B^{\top}$ and hence $\lambda(B^{\top},k,row)=\lambda(B,k,col).$

\begin{theorem}\label{theo: app}
Let
$A$ and $B$  be two 0-1 matrices
of sizes $p\times q$ and $q\times r,$
respectively. There is a simple deterministic
algorithm which provides an approximation
of all entries of the arithmetic matrix product of $A$ and $B$
within an additive error of at most:
\begin{enumerate}
\item
$2\lambda(A,\ell,row)$ in time $O(pq\ell +\ell qr+ pr),$ or
\item
$2\lambda(B,k,col)$ in time $O(rqk+pqk+pr)$, respectively. 
\end{enumerate}
\end{theorem}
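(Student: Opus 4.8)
The plan is to assemble the theorem directly from the two preceding lemmas, treating the two regimes $p \ge r$ and $p \le r$ separately and bridging them with the transpose identity $(AB)^{\top}=B^{\top}A^{\top}$.

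First I would dispatch part~(1), where $p \ge r$. Here the algorithm is simply $MMCLUS$-$Approx(A,B,\ell)$ instantiated with Gonzalez's $2$-approximation (Fact~\ref{fact: gon}) as the clustering subroutine, so the hypothesis of Lemma~\ref{lem: appapp} is met and each entry satisfies $|C_{ij}-D_{ij}| \le 2\lambda(A,\ell,row)$. The running time I would read off from Lemma~\ref{lem: apptime} as $O(pq\ell + pr + T(\ell,q,r))$; using the straightforward $O(sqt)$-time multiplication gives $T(\ell,q,r)=O(\ell q r)$, and since $p \ge r$ this is dominated by $pq\ell$. Thus the time collapses to $O(pq\ell + pr)$, yielding the stated bound.

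For part~(2), with $p \le r$, I would appeal to $(AB)^{\top}=B^{\top}A^{\top}$ and run $MMCLUS$-$Approx(B^{\top},A^{\top},k)$ instead. The point to check is that the input precondition of the procedure survives the swap: $B^{\top}$ has $r$ rows and $A^{\top}$ has $p$ columns, and $r \ge p$, so the requirement that the first matrix have at least as many rows as the second has columns still holds. Applying Lemma~\ref{lem: appapp} with these roles then approximates $(AB)^{\top}$ within additive error $2\lambda(B^{\top},k,row)$; since clustering the rows of $B^{\top}$ is identical to clustering the columns of $B$, we have $\lambda(B^{\top},k,row)=\lambda(B,k,col)$, which is exactly the advertised error, and transposing the output recovers an approximation of $AB$ with the same entrywise error. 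The time bound follows by instantiating Lemma~\ref{lem: apptime} with the swapped dimensions (a first matrix of size $r\times q$, a second of size $q\times p$, clustering parameter $k$), giving $O(rqk + rp + T(k,q,p))$; with $T(k,q,p)=O(kqp)$ and $r \ge p$, the term $kqp$ is dominated by $rqk$, so the total simplifies to $O(rqk + pr)$.

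Nothing here is genuinely hard: the whole argument is a bookkeeping exercise in combining the two lemmas. The only point I would flag as a potential pitfall rather than an obstacle is the correct mapping of the dimension and clustering parameters under the transpose in part~(2), together with verifying the identity $\lambda(B^{\top},k,row)=\lambda(B,k,col)$, so that the final error bound is phrased in terms of $B$ itself rather than $B^{\top}$.
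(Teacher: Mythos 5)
Your proposal is correct and follows essentially the same route as the paper: part~(1) is obtained by combining Lemma~\ref{lem: appapp} with Lemma~\ref{lem: apptime} and absorbing $T(\ell,q,r)=O(\ell q r)=O(\ell q p)$ into $O(pq\ell)$, and part~(2) is reduced to part~(1) via $(AB)^{\top}=B^{\top}A^{\top}$ together with the observation that $\lambda(B^{\top},k,row)=\lambda(B,k,col)$. The transpose-parameter bookkeeping you flag is exactly the point the paper also spells out, so there is no gap.
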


The heavy three-linear terms $pq\ell,$ $\ell qr,$
$rqk,$ and $pqk$ in the upper time bounds in Theorem~\ref{theo: app}
originate from the upper time bound on the center
clustering problem in Fact \ref{fact: gon} and the time taken by
matrix multiplication. We can get rid of these terms
by using a faster randomized algorithm for approximate
center clustering according to Fact \ref{fact: 2plus}, both on the rows
of the first matrix $A$  and the columns of the second matrix $B$,
in a way similar to \cite{AC11}.
Then, only the small matrices induced by the centers of
the rows of $A$ and the columns of $B$, respectively, need to be multiplied.
In this way, the terms $pq\ell$ and $rqk$ are replaced
by the presumably smaller term $\ell qk$ in the upper time
bound at the cost of increasing the additive error.

\bigskip
\noindent
\fbox{
\parbox{\boxwidth}{  
\smallskip
        {\bf procedure} $MMCLUS$-$R$-$Approx(A,B, \ell,k, \epsilon)$
        \par
\smallskip
        \noindent
            {\em Input}: Two 0-1 matrices $A$ and $B$  of sizes $p\times q$ and $q\times r,$
            respectively,
            positive integers $\ell,\ k$
            not exceeding $p,\ r,$ respectively,
            and a real number $\epsilon\in (0,1/2).$
            \par
\smallskip
            \noindent
                {\em Output}: A $p\times r$ matrix $D'$, where
                for $1\le i\le p$ and $1\le j\le r,$ $D'_{ij}$
                is a $(2+\epsilon)$-approximation of the inner product $C_{ij}$ of
                the $i$-th row $A_{i*}$ of $A$  and the $j$-th column
                $B_{*j}$ of $B.$
                \begin{enumerate}
                                  \item
                  Determine an approximate $\ell$-center clustering
                  of the rows of the matrix $A$ in $\{ 0,1\}^q$
                  by using Fact \ref{fact: 2plus}.
                  For each row $A_{i*}$ of $A$, set
                  $cen_{\ell}(A_{i*})$ to its closest center,
                  with ties broken arbitrarily.
                  \item
                  Determine an approximate $k$-center clustering
                  of the columns of the matrix $B$ in $\{ 0,1\}^q$
                  by using Fact \ref{fact: 2plus}.
                  For each column $B_{*j}$ of $B$, set
                  $cen_{k}(B_{*j})$ to its closest center,
                  with ties broken arbitrarily.
                \item
                  Form the $\ell \times q$ matrix $A',$ where
                  the $i'$-th row is the $i'$-th center in
                  the approximate $\ell$-center clustering
                  of the rows of $A.$
                  \item
                  Form the $q\times k$ matrix $B',$ where
                  the $j'$-th column is the $j'$-th center in
                  the approximate $k$-center clustering
                  of the columns of $B.$
                 \item
                   Compute the arithmetic $\ell \times k$ matrix  product
                   $C''$ of $A'$ and $B'.$ 
                 \item
                   For $1\le i\le p$ and $1\le j \le r,$ set
                   $D'_{ij}$ to $C''_{i'j'}$, where the $i'$-th row
                   $A'_{i'*}$ of $A'$ is $cen_{\ell}(A_{i*})$ and
                   the $j'$-th column $B'_{*j'}$ of $B'$ is
                   $cen_{k}(B_{*j})$.
                 \end{enumerate}
}}
\bigskip

 \begin{lemma}\label{lem: rappapp}
   Let $C$ stand for the arithmetic product of $A$
                   and $B.$ For  $1\le i\le p$ and $1\le j \le r,$
                   $|C_{ij}-D'_{ij}|\le (2+\epsilon)(\lambda(A,\ell,row)
                   +\lambda(B,k,col))$ holds w.h.p.
                   
 \end{lemma}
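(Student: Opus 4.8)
The plan is to reduce the claim to a standard perturbation bound for inner products of 0-1 vectors, and then feed in the approximation guarantee of Fact~\ref{fact: 2plus}. First I would record what $D'_{ij}$ actually is: by Steps~5 and~6 of $MMCLUS$-$R$-$Approx(A,B,\ell,k,\epsilon)$, the entry $D'_{ij}$ equals $C''_{i'j'}$, which is precisely the inner product of the row $A'_{i'*}=cen_{\ell}(A_{i*})$ and the column $B'_{*j'}=cen_{k}(B_{*j})$. Writing $a=A_{i*}$, $b=B_{*j}$, $a'=cen_{\ell}(A_{i*})$ and $b'=cen_{k}(B_{*j})$, all four are vectors in $\{0,1\}^q$, so the claim becomes a bound on $|\langle a,b\rangle-\langle a',b'\rangle|$, where $\langle\cdot,\cdot\rangle$ denotes the inner product.

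The key step is a telescoping decomposition,
$$\langle a,b\rangle-\langle a',b'\rangle=\langle a-a',\,b\rangle+\langle a',\,b-b'\rangle,$$
after which I would bound each summand separately. Since every coordinate of $b$ lies in $\{0,1\}$, the first term $\langle a-a',b\rangle=\sum_m (a_m-a'_m)b_m$ has absolute value at most $\sum_m|a_m-a'_m|=\mathrm{ham}(a,a')$; symmetrically, using that $a'$ is 0-1, $|\langle a',b-b'\rangle|\le\mathrm{ham}(b,b')$. Combining, $|\langle a,b\rangle-\langle a',b'\rangle|\le\mathrm{ham}(a,a')+\mathrm{ham}(b,b')$. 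The 0-1 hypothesis on all four vectors is exactly what makes each inner-product perturbation collapse to a Hamming distance, and this is the only place where that hypothesis is used in an essential way.

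Finally I would invoke Fact~\ref{fact: 2plus}. Because Steps~1 and~2 run the randomized $(2+\epsilon)$-approximation algorithm on the rows of $A$ and on the columns of $B$, w.h.p.\ we have $\mathrm{ham}(a,a')\le(2+\epsilon)\lambda(A,\ell,row)$ and $\mathrm{ham}(b,b')\le(2+\epsilon)\lambda(B,k,col)$, using that $a'$ and $b'$ are the closest centers to $a$ and $b$ and that $\lambda(\cdot)$ is the optimum objective value. Substituting these into the inequality above yields the asserted bound. I do not expect a genuine obstacle here; the only points requiring care are getting the direction of the telescoping step right and observing that both high-probability approximation guarantees must hold \emph{simultaneously}, which is secured by a union bound over the two clustering computations and preserves the w.h.p.\ guarantee.
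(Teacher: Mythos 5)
Your proposal is correct and follows essentially the same route as the paper's proof: both reduce the claim to the fact that perturbing each of the two 0-1 vectors by a given Hamming distance changes the inner product by at most that distance, and then plug in the $(2+\epsilon)$-approximation guarantees from Fact~\ref{fact: 2plus}. Your telescoping decomposition and the explicit union-bound remark simply make rigorous the step the paper states with a bare ``Hence.''
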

 \begin{proof} 
   For  $1\le i\le p$ and $1\le j \le r,$  $D'_{ij}$ is the inner product
   of $cen_{\ell}(A_{i*})$, where $\mathrm{ham}(A_{i*},cen_{\ell}(A_{i*}))\le
   (2+\epsilon)\lambda(A,\ell,row)$ w.h.p.,
   with $cen_{k}(B_{*j})$,
   where $\mathrm{ham}(B_{*j},cen_{k}(B_{*j}))\le
(2+\epsilon)\lambda(B,k,col)$ w.h.p. by Fact \ref{fact: 2plus}.
 Hence, $C_{ij}$, which is
   the inner product of $A_{i*}$ with $B_{*j},$ can differ at most by
   $(2+\epsilon)(\lambda(A,\ell,row)+\lambda(B,k,col))$ from $D'_{ij}$
   w.h.p.
 \end{proof}
                 
 \begin{lemma}\label{lem: rapptime}
   $MMCLUS$-$R$-$Approx(A,B, \ell,k,\epsilon)$
   can be implemented in time \linebreak
   $O(p\log p(q + \ell)/\epsilon^2 +r\log r(q + k)/\epsilon^2+pr+T(\ell,q,k))$.
     \end{lemma}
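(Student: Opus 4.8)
The plan is to charge each of the six steps of $MMCLUS$-$R$-$Approx$ to one of the terms in the claimed bound and then observe which contributions dominate. Steps~1 and~2 are handled directly by Fact~\ref{fact: 2plus}; Steps~3--5 constitute the cost of building and multiplying the two small center matrices; and Step~6 is a table lookup for each of the $pr$ output entries. The proof is therefore essentially a careful accounting, with one subtle point about how the nearest-center assignment is performed.

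Concretely, I would argue as follows. Step~1 runs the randomized $(2+\epsilon)$-approximation of Fact~\ref{fact: 2plus} on the $p$ rows of $A$, regarded as points in $\{0,1\}^q$, with $\ell$ centers; this takes $O(p\log p(q+\ell)/\epsilon^2)$ time and, since a clustering includes the assignment of each point to a center, it also yields the map $cen_{\ell}$. Symmetrically, Step~2 runs it on the $r$ columns of $B$ in $\{0,1\}^q$ with $k$ centers in $O(r\log r(q+k)/\epsilon^2)$ time, yielding $cen_{k}$. Steps~3 and~4 merely copy the $\ell$ row centers and the $k$ column centers into $A'$ and $B'$, costing $O(\ell q)$ and $O(qk)$ time, both absorbed into $T(\ell,q,k)$ (which is at least $\ell q + qk$ since any multiplication algorithm must read both of its input matrices). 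Step~5 multiplies the $\ell\times q$ matrix $A'$ by the $q\times k$ matrix $B'$ in $T(\ell,q,k)$ time. Finally, Step~6 sets each of the $pr$ entries $D'_{ij}$ by looking up the single entry $C''_{i'j'}$ determined by the already-computed values $cen_{\ell}(A_{i*})$ and $cen_{k}(B_{*j})$, for a total of $O(pr)$ time. Summing the nonredundant contributions gives exactly $O(p\log p(q+\ell)/\epsilon^2 + r\log r(q+k)/\epsilon^2 + pr + T(\ell,q,k))$.

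The one point that needs care --- and the only place where the bound could secretly blow up --- is the assignment of each point to its nearest center. A naive assignment, computing for every one of the $p$ rows its Hamming distance to each of the $\ell$ candidate centers in dimension $q$, would cost $\Theta(p\ell q)$ (and $\Theta(rkq)$ for the columns), reintroducing precisely the three-linear terms $pq\ell$ and $rqk$ that this randomized variant is designed to avoid. The resolution is that Fact~\ref{fact: 2plus} already folds the assignment into its stated running time: the dimension reduction performs the clustering, and hence the nearest-center assignment, in the logarithmic target dimension $O(\log p/\epsilon^2)$ (respectively $O(\log r/\epsilon^2)$) rather than in dimension $q$. I would therefore be careful to read the output of Fact~\ref{fact: 2plus} as supplying not only the $\ell$ (resp.\ $k$) centers but also the maps $cen_{\ell}$ and $cen_{k}$ within the quoted time, so that Step~6 genuinely reduces to constant-time lookups and no further distance computations are incurred.
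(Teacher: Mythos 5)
Your accounting is correct and matches the paper's proof step for step: Steps 1--2 charged to Fact~\ref{fact: 2plus}, Steps 3--5 to $T(\ell,q,k)$, Step 6 to $pr$. The subtle point you flag about the nearest-center assignment is exactly the one the paper also addresses, by noting that the clusters (i.e., the maps $cen_{\ell}$ and $cen_{k}$) are formed within the running time of Fact~\ref{fact: 2plus}, with reference to its proof in the cited work.
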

  \begin{proof}
      In Step 1, the $\ell$ centers of the rows of $A$
    can be found in $O(p\log p(q + \ell)/\epsilon^2)$ time
    by Fact \ref{fact: 2plus}. Also, the $\ell$ clusters induced
    by the centers can be formed in $O(p\log p(q + \ell)/\epsilon^2)$ time
    in order to approximate minimum-diameter $\ell$-clustering
    of the rows of $A$; see the proof of Fact \ref{fact: 2plus} in \cite{KLP26}.
    Importantly, each member of such a cluster is within a Hamming distance
    not exceeding $(2+\epsilon)\lambda(A, \ell, row)$ from its center.
    Symmetrically, Step 2 takes $O(r\log r(q + k)/\epsilon^2)$ time.
    Steps  3 and 4 can be done 
    $O(\ell q)$ time and $O(kq)$ time, respectively,
    which is $O(T(\ell,q,k))$ time.
   Finally, Step 5 takes $T(\ell,q,k)$ time while Step 6
   can be done in $O(pr)$ time. Thus, the overall time is
   $O(p\log p(q + \ell)/\epsilon^2 +r\log r(q + k)/\epsilon^2+pr+T(\ell,q,k))$.
  \end{proof}

  Lemmata \ref{lem: rappapp} and
\ref{lem: rapptime} yield our second theorem.
                 
\begin{theorem}\label{theo: rapp}
Let
$A$ and $B$  be two 0-1 matrices
of sizes $p\times q$ and $q\times r,$
respectively. There is a simple randomized
algorithm which for any $\epsilon\in (0,1/2)$ provides an approximation
of all entries of the arithmetic matrix product of $A$ and $B$
within an additive error of at most
  $(2+\epsilon)(\lambda(A,\ell,row)+\lambda(B,k,col))$ w.h.p. in time
  $O(p\log p(q + \ell)/\epsilon^2+r\log r(q + k)/\epsilon^2+
  \ell q k +pr)$.
\end{theorem}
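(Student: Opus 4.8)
The plan is to run the procedure $MMCLUS$-$R$-$Approx(A,B,\ell,k,\epsilon)$ and to derive the theorem by combining the accuracy guarantee of Lemma~\ref{lem: rappapp} with the running-time analysis of Lemma~\ref{lem: rapptime}, after specializing the generic matrix-multiplication cost $T(\ell,q,k)$ to the naive cubic bound. First I would note that the procedure is well defined under the stated hypotheses, since its specification requires $\ell\le p$ and $k\le r$, so that both center-clustering steps make sense.

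For the accuracy part, I would invoke Lemma~\ref{lem: rappapp} directly: it gives $|C_{ij}-D'_{ij}|\le (2+\epsilon)(\lambda(A,\ell,row)+\lambda(B,k,col))$ for every $1\le i\le p$ and $1\le j\le r$. The one point needing care, and the main subtlety of the proof, is the qualifier ``w.h.p.''. The bound of Lemma~\ref{lem: rappapp} is conditioned on the two calls to the randomized algorithm of Fact~\ref{fact: 2plus} in Steps~1 and~2 each returning a $(2+\epsilon)$-approximate clustering, and each such call achieves this only with high probability. I would therefore apply a union bound over the two calls: since each fails with probability at most $N^{-\alpha}$ for any constant $\alpha\ge 1$, both succeed simultaneously with high probability, and hence the entrywise error bound holds for all $i,j$ with high probability.

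For the running time, I would quote the bound of Lemma~\ref{lem: rapptime}, namely $O(p\log p(q + \ell)/\epsilon^2 +r\log r(q + k)/\epsilon^2+pr+T(\ell,q,k))$, and then substitute $T(\ell,q,k)=O(\ell q k)$ using the straightforward cubic-time algorithm to multiply the two small center matrices $A'$ and $B'$. This yields the claimed running time $O(p\log p(q + \ell)/\epsilon^2+r\log r(q + k)/\epsilon^2+\ell q k +pr)$ and completes the argument. Since both lemmata do the real work, the only genuine obstacle is the bookkeeping of the union bound that upgrades the two individual high-probability clustering guarantees to a single high-probability guarantee for the whole output matrix.
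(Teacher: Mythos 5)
Your proposal matches the paper's own argument: the theorem is obtained exactly by combining Lemma~\ref{lem: rappapp} (accuracy) with Lemma~\ref{lem: rapptime} (running time) and instantiating $T(\ell,q,k)=O(\ell qk)$ via the straightforward multiplication of the two small center matrices. Your additional remark about taking a union bound over the two randomized clustering calls to justify the ``w.h.p.''\ qualifier is a correct piece of bookkeeping that the paper leaves implicit.
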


\section{Preprocessing for answering single-entry queries}
In this section, we use our algorithms for an approximate matrix product
of 0-1 matrices for the construction of a preprocessing of the input
matrices enabling fast answers to queries asking for the exact
value of a single entry in the product matrix. As corollaries,
we obtain upper time bounds on computing the exact matrix product,
one of them matching that of Theorem~\ref{theo: mmstclus}.

We can apply
$MMCLUS$-$Approx(A,B, \ell)$
to obtain a preprocessing for answering queries about single entries of
the arithmetic matrix product of $A$ and $B.$
In the first step of the preprocessing,
$MMCLUS$-$Approx(A,B, \ell)$ is run. In the second step,
for each row of $A$, the set of indices of coordinates
on which it differs from its closest center is computed.

\bigskip
\noindent
\fbox{
\parbox{\boxwidth}{  
\smallskip
        {\bf procedure} $MMCLUS$-$Preproc(A,B, \ell)$
        \par
\smallskip
        \noindent
            {\em Input}: Two 0-1 matrices $A$ and $B$  of sizes $p\times q$ and $q\times r,$
            respectively,
            and a positive integer $\ell$ 
            not exceeding $p$.
            \par
\smallskip
            \noindent
            {\em Output}: The $p\times r$ matrix $D$
            returned by $MMCLUS$-$Approx(A,B, \ell)$,
and for $1\le i\le p$, the set 
of coordinate indices $ind(A,i)$ on
which $A_{i*}$  differs from the closest center.
 
\begin{enumerate}
                 \item
                   Run $MMCLUS$-$Approx(A,B, \ell)$.
                 \item
                  
 For $1\le i \le p,$ determine the set $ind(A,i)$
                   of coordinate indices on which $A_{i*}$ differs
                   from $cen_{\ell}(A_{i*})$.
                    \end{enumerate}
}}
\bigskip

   \begin{lemma}\label{lem: pretime}
   $MMCLUS$-$Preproc(A,B, \ell)$
rums in
   $O(pq\ell+pr
   +T(\ell,q,r))$ time.
    \end{lemma}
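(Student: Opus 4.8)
The plan is to bound the running time of each of the two steps of $MMCLUS$-$Preproc(A,B,\ell)$ separately and then sum the two contributions.

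First I would handle Step~1, which simply calls $MMCLUS$-$Approx(A,B,\ell)$. By Lemma~\ref{lem: apptime}, this invocation runs in $O(pq\ell + pr + T(\ell,q,r))$ time, so Step~1 already accounts for the full claimed bound. The key observation carried over from that call is that, as part of Step~1 of $MMCLUS$-$Approx$, the closest center $cen_{\ell}(A_{i*})$ has been assigned to each row $A_{i*}$ of $A$; hence this information is available for free in Step~2 of the preprocessing and need not be recomputed.

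Next I would analyze Step~2, which computes the index sets $ind(A,i)$ for $1 \le i \le p$. For a fixed row $A_{i*}$, computing $ind(A,i)$ amounts to scanning all $q$ coordinates of $A_{i*}$, comparing each against the corresponding coordinate of $cen_{\ell}(A_{i*})$, and recording the positions where they differ. Since $cen_{\ell}(A_{i*})$ is already known from Step~1 and both vectors lie in $\{0,1\}^q$, this costs $O(q)$ time per row, and therefore $O(pq)$ time over all $p$ rows.

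Finally I would combine the two estimates: the total running time is $O(pq\ell + pr + T(\ell,q,r)) + O(pq)$, and since $\ell \ge 1$ the term $O(pq)$ is dominated by $pq\ell$, yielding the claimed bound $O(pq\ell + pr + T(\ell,q,r))$. I do not expect any genuine obstacle here, as the argument is a routine additive time analysis; the only point worth stating explicitly is that the nearest-center assignment is reused from Step~1, so Step~2 reduces to a coordinate-wise comparison rather than a fresh nearest-center search, keeping its cost at $O(pq)$ rather than $O(pq\ell)$.
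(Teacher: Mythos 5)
Your proposal is correct and follows essentially the same route as the paper's own proof: invoke Lemma~\ref{lem: apptime} for Step~1 and observe that Step~2 is an $O(pq)$ coordinate-wise comparison against the already-assigned centers. The only difference is that you spell out the reuse of the nearest-center assignment and the absorption of the $O(pq)$ term, which the paper leaves implicit.
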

 \begin{proof} 
   Step 1 can be done in $O(pq\ell +pr+T(\ell,q,r))$ time
by Lemma \ref{lem: apptime}. Step 2 
   can easily be implemented in $O(pq)$ time.
\qed
 \end{proof}

\medskip

The idea of our procedure for answering a query about a single entry $C_{ij}$
of the matrix product of $A$ and $B$ is to recover
the exact value of the entry from its approximation. 
Recall that the approximation is the inner
product of the center assigned to the $i$-th row
of $A$ with the $j$-th column of $B$. The correction
takes time proportional to the size of the set of
indices of coordinates on which the $i$-th row and
the center differ. See the procedure $MMCLUS$-$Query(A,B, \ell,i,j)$
for details.

\bigskip
\noindent
\fbox{
\parbox{\boxwidth}{  
\smallskip
        {\bf procedure} $MMCLUS$-$Query(A,B, \ell,i,j)$
        \par
\smallskip
        \noindent
            {\em Input}: The preprocessing done by
            $MMCLUS$-$Preproc(A,B,\ell)$
            for 0-1 matrices $A$ and $B$  of sizes $p\times q$ and $q\times r,$
respectively,
$\ell\in [p]$,  
            and two query indices $i\in [p]$ and $j\in [r].$
            \par
\smallskip
            \noindent
                {\em Output}: The inner product $C_{ij}$ of
                the $i$-th row $A_{i*}$ of  $A$  and the $j$-th column
                $B_{*j}$ of $B.$
                \begin{enumerate}
                
                 \item
                   Set $C_{ij}$ to the entry $D_{ij}$  of the
                   matrix $D$ computed by
                   $MMCLUS$-$Approx(A,B, \ell)$ in
                   $MMCLUS$-$Preproc(A,B,\ell)$.
\item
                   For $m\in ind(A,i)$ do
                   \begin{enumerate}
                 \item 
                     If the $m$-th coordinate
                     of the center assigned to $A_{i*}$ is $0$
                     and $B_{mj}=1$ then $C_{ij}\leftarrow
                     C_{ij}+1$.
                   \item
                     If the $m$-th coordinate of the center
                     assigned to $A_{i*}$ is $1$ and $B_{mj}$ is also $1$
                  then $C_{ij}\leftarrow
                  C_{ij}-1$.
 \end{enumerate}
\end{enumerate}
}}
\bigskip

\begin{lemma} \label{lem: querycor}
$MMCLUS$-$Query(A,B, \ell,i,j)$
is correct, i.e., the final value of $C_{ij}$ is the inner product
of the $i$-th row $A_{i*}$ of  $A$  and the $j$-th column
$B_{*j}$ of $B.$
\end{lemma}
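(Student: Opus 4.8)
The plan is to show that the query procedure computes the exact inner product by correcting the stored approximation $D_{ij}$ one differing coordinate at a time. First I would recall that, by the way $MMCLUS$-$Approx(A,B,\ell)$ sets its output (Step~\ref{step: Compute_D}), the entry $D_{ij}$ produced during preprocessing is exactly the inner product of the assigned center $cen_{\ell}(A_{i*})$ with the column $B_{*j}$; writing $c = cen_{\ell}(A_{i*})$ for brevity, this means $D_{ij} = \sum_{m=1}^{q} c_m B_{mj}$. The target value is $C_{ij} = \sum_{m=1}^{q} A_{im} B_{mj}$, so the whole task reduces to verifying that the loop of Step~2 transforms the initial value $D_{ij}$ into $C_{ij}$.

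Next I would write the discrepancy term by term as $C_{ij} - D_{ij} = \sum_{m=1}^{q} (A_{im} - c_m) B_{mj}$. The key structural observation is that $A_{im} - c_m = 0$ whenever $A_{im} = c_m$, i.e.\ for every coordinate $m \notin ind(A,i)$; hence this sum is supported entirely on the precomputed index set $ind(A,i)$. This is precisely why the loop ranges over $m \in ind(A,i)$ and leaves every other coordinate untouched, and it is also why the coordinates outside $ind(A,i)$ need no correction.

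The heart of the argument is the per-coordinate case analysis for $m \in ind(A,i)$. Since $A_{im}$ and $c_m$ both lie in $\{0,1\}$ and, by the definition of $ind(A,i)$, differ, knowing $c_m$ pins down $A_{im}$: if $c_m = 0$ then $A_{im} = 1$, so $A_{im} - c_m = +1$; if $c_m = 1$ then $A_{im} = 0$, so $A_{im} - c_m = -1$. Multiplying by $B_{mj} \in \{0,1\}$, the term $(A_{im} - c_m) B_{mj}$ equals $+1$ exactly in case~(a) (namely $c_m = 0$ and $B_{mj} = 1$), equals $-1$ exactly in case~(b) (namely $c_m = 1$ and $B_{mj} = 1$), and equals $0$ when $B_{mj} = 0$, in which situation the loop correctly does nothing. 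Thus each iteration adds precisely $(A_{im} - c_m) B_{mj}$ to the running value.

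Summing over all $m \in ind(A,i)$, the loop adds $\sum_{m \in ind(A,i)} (A_{im} - c_m) B_{mj} = C_{ij} - D_{ij}$ to the initial value $D_{ij}$, so the final value equals $C_{ij}$, as claimed. I do not expect any genuine obstacle: the only point needing care is the bookkeeping in the case analysis, specifically the observation that membership in $ind(A,i)$ together with the 0-1 constraint determines $A_{im}$ from $c_m$, which is exactly what makes cases~(a) and~(b) exhaustive and correctly signed.
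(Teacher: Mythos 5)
Your proof is correct and follows the same approach as the paper's: initialize with the inner product of the assigned center and $B_{*j}$, then correct by $\pm 1$ over the coordinates in $ind(A,i)$ where the row and its center differ. Your version just spells out the algebra $C_{ij}-D_{ij}=\sum_{m\in ind(A,i)}(A_{im}-c_m)B_{mj}$ more explicitly than the paper does.
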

              \begin{proof}
                $C_{ij}$ is initially set to
                $D_{ij}$, which is the inner product of the center
                assigned to $A_{i*}$ and $B_{*j}.$ Then, $C_{ij}$ is
                appropriately corrected by increasing or decreasing
                with $1$ for each coordinate index $m\in ind(A,i)$
                which contributes $1$ to the inner product of $A_{i*}$
                and $B_{*j}$ and $0$ to the inner product of the
                center of $A_{i*}$ and $B_{*j}$ or {\em vice versa}.
              \end{proof}

              \begin{lemma}\label{lem: querytime}
                $MMCLUS$-$Query(A,B,\ell,i,j)$
                takes $O(\lambda(A,\ell,row)+1)$ time.
              \end{lemma}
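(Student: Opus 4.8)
The plan is to bound the running time by the size of the set $ind(A,i)$ produced during preprocessing, and then to bound that size using the $2$-approximation guarantee already invoked in Lemma~\ref{lem: appapp}. First I would observe that Step~1 of the query procedure is a single lookup: it copies the precomputed entry $D_{ij}$ of the matrix~$D$ (available from $MMCLUS$-$Preproc(A,B,\ell)$) into $C_{ij}$, which takes $O(1)$ time. The entire cost therefore lies in Step~2.

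Next I would analyze the loop in Step~2. It iterates over the coordinate indices $m\in ind(A,i)$, and for each such $m$ it performs only constant work: reading the $m$-th coordinate of the center $cen_{\ell}(A_{i*})$ assigned to $A_{i*}$, reading the entry $B_{mj}$, comparing these two bits, and possibly incrementing or decrementing $C_{ij}$ by~$1$. Since both the stored center and the matrix~$B$ support $O(1)$-time coordinate access, each iteration is $O(1)$, so the loop runs in $O(|ind(A,i)|)$ time and the whole procedure runs in $O(1+|ind(A,i)|)$ time.

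It then remains to bound $|ind(A,i)|$. By the definition of $ind(A,i)$ as the set of coordinate indices on which $A_{i*}$ differs from $cen_{\ell}(A_{i*})$, we have exactly $|ind(A,i)| = \mathrm{ham}(A_{i*},cen_{\ell}(A_{i*}))$. Because $MMCLUS$-$Approx(A,B,\ell)$ (run inside the preprocessing) employs a $2$-approximation algorithm for $\ell$-center clustering, the same estimate used in the proof of Lemma~\ref{lem: appapp} gives $\mathrm{ham}(A_{i*},cen_{\ell}(A_{i*}))\le 2\lambda(A,\ell,row)$. Combining this with the time bound from the previous paragraph yields a total running time of $O(\lambda(A,\ell,row))$, as claimed.

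The argument is essentially bookkeeping, so I do not anticipate a genuine obstacle; the only point requiring care is to make explicit that each pass through the loop body is $O(1)$ under the assumed random-access representation of the centers and of~$B$, and that the crucial size bound $|ind(A,i)|\le 2\lambda(A,\ell,row)$ is inherited directly from the $2$-approximation property of Gonzalez's algorithm recorded in Fact~\ref{fact: gon}.
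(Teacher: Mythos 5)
Your proof is correct and follows essentially the same route as the paper's: Step~1 is an $O(1)$ lookup, each iteration of the loop in Step~2 is $O(1)$ thanks to constant-time access to the center's coordinates (via the stored matrix $A'$) and to $B$, and the iteration count is bounded by $|ind(A,i)|=\mathrm{ham}(A_{i*},cen_{\ell}(A_{i*}))\le 2\lambda(A,\ell,row)$ from the $2$-approximation guarantee of Gonzalez's algorithm. No gaps.
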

             \begin{proof}
               Recall that $2\lambda(A,\ell,row)$
               is an upper bound on the maximum
               Hamming distance between a row of $A$ and the closest center
               in the $\ell$-center clustering computed by
               $MMCLUS$-$Approx(A,B, \ell)$ in $MMCLUS$-$Preproc(A,B,\ell)$.
Step 1 takes $O(1)$ time.
               Since
              the $m$-th coordinate in the centers can be accessed
               in the matrix $A'$ computed by
               $MMCLUS$-$Approx(A,B, \ell)$, 
               each of the two substeps
               in the block of the loop in Step 2 can be done in
               $O(1)$ time.  Finally, since
               $|ind(A,i)|\le 2\lambda(A,\ell,row)$,
               the block is iterated at most
               $2\lambda(A,\ell,row)$ times.
               Consequently, the whole Step 2 takes
               $O(\lambda(A,\ell,row)+1)$ time.
             \end{proof}

             By putting Lemmata \ref{lem: pretime}, \ref{lem:
               querycor}, and \ref{lem: querytime} together,
             and using the
             straightforward $O(sqt)$-time algorithm to multiply
             matrices of size $s\times q$ and $q\times t,$ we
             obtain the first variant of
             our next main result.
             The second variant reduces to the first one
             by $(AB)^{\top}=B^{\top}A^{\top}.$
             We simply run
             $MMCLUS$-$Preproc(B^{\top},A^{\top},k)$
             and
             $MMCLUS$-$Query(B^{\top},A^{\top},k ,j,i)$
             instead.

\begin{theorem}\label{theo: query}
  Let $A$ and $B$ be two 0-1 matrices of sizes $p\times q$ and
  $q\times r,$ respectively. Given parameters $\ell \in [p]$
  and $k\in [r],$ the matrices can be preprocessed by a simple
deterministic algorithm
in $O(pq\ell +\ell qr+pr)$ time or 
$O(rqk + pqk+pr)$ time 
such that a query asking for the exact  value of a single entry
$C_{ij}$ of the arithmetic matrix product $C$
of $A$ and $B$ can be answered in 
$O(\lambda(A,\ell,row)+1)$ time 
or $O(\lambda(B,k,col)+1)$ time, respectively.
\end{theorem}
\junk{
The methods of Theorem \ref{theo: app} and \ref{theo: query}
can be refined as follows. If $p\ge r$ then the matrix $C'$ in
$MMCLUS$-$Approx(A,B,\ell,k)$
  is set to the arithmetic product
  of $A'$ with $B$ (instead of $B'$, we just use
  the trivial $r$-clustering of the columns of $B$). Otherwise, i.e.,
  when $p<r,$
  $C'$is set to
  to the arithmetic product of $A$ (instead of $A'$,
  we just use the trivial $p$-clustering
  of the rows of $A$) with $B'.$
  Note that in the first case $C'$ can be computed by the
  straightforward algorithm in $O(\ell qr)$ time which does not
  exceed the asymptotic time $pq\ell$ of computing the approximate
  $\ell$-center clustering of the rows of $A.$ Symmetrically,
  in the second case, $C'$ can be computed in $O(pqk)$-time
  which does not exceed the asymptotic time $rqk$ of computing
  the $k$-center clustering of the columns of $B.$
  In result, the upper bound on the additive error reduces
  to $2\lambda(A,\ell,row)$ in the first case
  and $2\lambda(B,\ell,column)$ in the second case, respectively,
  in Theorem \ref{theo: app}. Similarly, the query time drops
  to $O(\lambda(A,\ell,row))$ in the first case and $O(\lambda(B,\ell,column))$
  in the second case, respectively, in Theorem \ref{theo: query}.
  Also, the construction of the $k$-center clustering
  of the columns of $B$ in the first case as well as the construction of
  the $\ell$-center clustering of the rows of $A$ in the second case
  are not needed any more. The details are left to the reader.}

\junk{Analogously, by using Theorem \ref{theo: rapp} with
$\epsilon$ set, say, to $\frac 14$ instead of
Theorem \ref{theo: app}, we obtain the following randomized variant
of Theorem \ref{theo: query}.}

Analogously, by using Lemmata \ref{lem: rappapp}
  and \ref{lem: rapptime},
  we obtain an alternative, randomized variant of
  Theorem \ref{theo: query}.

  To begin with, we modify
  $MMCLUS$-$Preproc(A,B,\ell)$
  to $MMCLUS$-$R$-$Preproc(A,B,\ell,k, \epsilon).$
  The modified procedure runs
  $MMCLUS$-$R$- \linebreak $Approx(A,B, \ell,k,\epsilon)$
  instead of $MMCLUS$-$Approx(A,B, \ell)$ in its first step.
  Its second step is the same as in $MMCLUS$-$Preproc(A,B,\ell)$.
  In an additional third step,  the set $ind(B,j)$
 of coordinate indices on which $B_{*j}$ differs
 from $cen_{k}(B_{*j})$ is computed for $1\le j \le r.$
 By Lemma \ref{lem: rapptime}
 and the fact that the second and third
 steps can be done in $O(pq+rq)$ tine, we infer that
 $MMCLUS$-$R$-$Preproc(A,B,\ell,k, \epsilon)$
 can be implemented in time
 $O(p\log p(q + \ell)/\epsilon^2 +r
 \log r(q + k)/\epsilon^2+pr+T(\ell,q,k))$.
 
 Next, we modify $MMCLUS$-$Query(A,B,\ell,i,j)$
 to $MMCLUS$-$R$- \linebreak $Query(A,B,\ell,k,\epsilon,i,j),$
 replacing $MMCLUS$-$Preproc(A,B,\ell)$ with
 $MMCLUS$-$R$-$Preproc(A,B,\ell,k, \epsilon)$ in the input.
 In the first analogous step of the modified procedure,
 $C_{ij}$ is set to the entry $D'_{ij}$  of the
  matrix $D'$ computed by the call of
$MMCLUS$-$R$-$Approx(A,B, \ell,k,\epsilon)$ in
$MMCLUS$-$R$-$Preproc(A,B,\ell,k,\epsilon)$.
The second step is the same as in $MMCLUS$-$Query(A,B,\ell,i,j)$.
In an additional third step, $C_{ij}$ is symmetrically updated
with respect to $ind(B,j)$ as follows:
\par
\noindent
For $m\in ind(B,j),$
if the $m$-th coordinate
 of the center assigned to $B_{*j}$ is $0$
  and $A_{im}=1$ then $C_{ij}$ is increased by $1.$
 Also, if the $m$-th coordinate of the center
  assigned to $B_{*j}$ is $1$ and $A_{im}$ is also $1$
  then $C_{ij}$ decreased by $1.$
  
  The correctness of $MMCLUS$-$R$-$Query(A,B,\ell,k,\epsilon,i,j)$ is
  obvious as \linebreak that of $MMCLUS$-$Query(A,B,\ell,i,j)$. By Lemma
  \ref{lem: rappapp},
  $MMCLUS$-$R$- \linebreak $Query(A,B,\ell,k,\epsilon,i,j)$
  takes $O(\lambda(A,\ell,row)+\lambda(B,k,col )+1)$ time w.h.p.

  By the correctness of $MMCLUS$-$R$-$Query(A,B,\ell,k,\epsilon,i,j)$ and
  $MMCLUS$-$R$-$Preproc(A,B,\ell,k,\epsilon),$ and the upper time
  bounds on the running times of these procedures, we obtain the
  following theorem.  
 
\begin{theorem}\label{theo: rquery}
  Let $A$ and $B$ be two 0-1 matrices of sizes $p\times q$ and
  $q\times r,$ respectively. Given parameters $\ell \in [p]$ and
  $k\in [r],$
  the matrices can be preprocessed by a simple
randomized  algorithm
in time  $O(p\log p(q + \ell)+r\log r(q + k)+
  \ell q k +pr)$
such that a query asking for the exact  value of a single entry
$C_{ij}$ of the arithmetic matrix product $C$
of $A$ and $B$ can be answered in 
$O(\lambda(A,\ell,row)+\lambda(B,k,col )+1)$ time w.h.p.
\end{theorem}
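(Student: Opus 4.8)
The plan is to mirror the deterministic preprocessing-and-query scheme behind Theorem~\ref{theo: query}, but to build it on top of the two-sided randomized approximation of Theorem~\ref{theo: rapp} rather than the one-sided deterministic approximation of Theorem~\ref{theo: app}. Concretely, I would run $MMCLUS$-$R$-$Approx(A,B,\ell,k,\tfrac14)$ as the preprocessing routine, storing not only the approximate product matrix $D'$ and the center matrices $A'$ and $B'$, but also, for every $i\in[p]$ and $j\in[r]$, the sets $ind(A,i)$ and $ind(B,j)$ of coordinates on which $A_{i*}$ and $B_{*j}$ differ from their assigned centers $cen_{\ell}(A_{i*})$ and $cen_{k}(B_{*j})$, respectively. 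The preprocessing time is then governed by Lemma~\ref{lem: rapptime} (equivalently Theorem~\ref{theo: rapp}) with the constant choice $\epsilon=\tfrac14$, so the $1/\epsilon^2$ factors become constants; computing the two families of index sets adds only $O(pq+rq)$ time, which is absorbed, yielding the claimed $O(p\log p(q+\ell)+r\log r(q+k)+\ell qk+pr)$ bound.

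For a query $(i,j)$, write $a=A_{i*}$, $a'=cen_{\ell}(A_{i*})$, $b=B_{*j}$, and $b'=cen_{k}(B_{*j})$, so that the stored entry is $D'_{ij}=\langle a',b'\rangle$. The key identity I would use is the telescoping decomposition
\[
\langle a,b\rangle-\langle a',b'\rangle=\langle a-a',\,b\rangle+\langle a',\,b-b'\rangle,
\]
whose first term is supported only on $ind(A,i)$ and whose second term is supported only on $ind(B,j)$. The query therefore initializes $C_{ij}\leftarrow D'_{ij}$, then adds $(a_m-a'_m)\,b_m$ over $m\in ind(A,i)$ (using the exact column $b$) and $a'_m\,(b_m-b'_m)$ over $m\in ind(B,j)$ (using the row center $a'$). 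Each summand lies in $\{-1,0,+1\}$ and is read off in $O(1)$ time from $A$, $A'$, $B$, and $B'$, so correctness follows immediately from the identity above, exactly as in Lemma~\ref{lem: querycor}.

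For the running time I would invoke Fact~\ref{fact: 2plus}: every row of $A$ lies within Hamming distance $(2+\epsilon)\lambda(A,\ell,row)$ of its center w.h.p., and symmetrically for the columns of $B$, so $|ind(A,i)|\le(2+\epsilon)\lambda(A,\ell,row)$ and $|ind(B,j)|\le(2+\epsilon)\lambda(B,k,col)$ w.h.p. With $\epsilon=\tfrac14$ this bounds the number of loop iterations by $O(\lambda(A,\ell,row)+\lambda(B,k,col))$, giving the stated query time.

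The main obstacle, compared with the deterministic query of Theorem~\ref{theo: query}, is that both factors are now replaced by centers, so the correction is genuinely two-sided. A naive fix that sums independently over $ind(A,i)$ and over $ind(B,j)$ would double-count the coordinates in $ind(A,i)\cap ind(B,j)$; the telescoping split above is exactly what avoids this, at the price of keeping the exact column $b$ in the first sum and the center $a'$ (not the exact row $a$) in the second, which I would check carefully. The only further point is that the radius bounds underlying the iteration count are now probabilistic, so the time guarantee holds w.h.p.\ rather than deterministically, while the returned value $C_{ij}$ is always exact.
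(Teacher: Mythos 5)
Your proposal is correct and follows exactly the route the paper intends: the paper's entire ``proof'' of Theorem~\ref{theo: rquery} is the one-line remark that one applies Theorem~\ref{theo: rapp} with $\epsilon=\tfrac14$ in place of Theorem~\ref{theo: app} and proceeds as for Theorem~\ref{theo: query}. Your telescoping identity $\langle a,b\rangle-\langle a',b'\rangle=\langle a-a',b\rangle+\langle a',b-b'\rangle$ is precisely the detail needed to make that ``analogously'' rigorous, since the deterministic query of $MMCLUS$-$Query$ only corrects one side, and you correctly identify and avoid the double-counting pitfall on $ind(A,i)\cap ind(B,j)$ that a naive two-sided correction would incur.
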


Theorems \ref{theo: query} and \ref{theo: rquery}
  can be especially useful in situations where some moderate number
  of entries in the arithmetic product matrix of
  the input 0-1 matrices are of interest.
  Then, the overall time taken by the preprocessing
  and the queries might be substantially smaller than that
  when computing each of the entries separately as well as that
  needed to compute the whole matrix product.

\junk{
In this section, we present three upper time bounds on computing the
exact matrix product of two 0-1 matrices.  The first two directly
follow from our time bounds on preprocessing enabling an efficient
reporting of the exact value of a single entry in the matrix product,
established in the preceding section. Simply, after the
preprocessing, it is sufficient to query about the exact value of each
entry in the product. The third upper time bound is based on a hybrid
approach combing clustering of the rows or columns of the input
matrices with the idea of traversing a spanning tree of the rows or
columns of relatively low Hamming cost in order to compute the product.}

Theorem \ref{theo: query} yields the following corollary.

\begin{corollary}\label{cor: query}
  Let $A$ and $B$ be two 0-1 matrices of sizes $p\times q$ and
  $q\times r,$ respectively. Given parameters $\ell \in [p]$
and $k\in [r],$ the arithmetic matrix product of $A$ and $B$
can be computed by a simple deterministic
algorithm in time
  $O(pr+\min\{pq\ell + \ell qr+pr\lambda(A,\ell,row),
  rqk + pqk+ pr\lambda(B,k,col)\})$.   
\end{corollary}
\begin{proof}
 It follows from Theorem \ref{theo: query} that by
  querying for the value of each entry in the
  arithmetic matrix product $A$ and $B$
  we can compute the product in time
  \begin{enumerate}
  \item
    $O(pq\ell + \ell qr+pr(\lambda(A,\ell,row)+1))$ or
  \item
    $O(rqk + pqk+ pr(\lambda(B,k,col)+1)).$
  \end{enumerate}
To obtain the upper time bound stated
in the theorem, we can just alternate the
  steps of the algorithm yielding the first upper bound
  with the steps of the algorithm yielding the second upper bound
  and stop whenever any of these algorithms stops.
\junk{
  To obtain the third upper bound, we run Gonzalez's algorithm twice, first
  on the rows of the matrix $A$ and then on the columns of the matrix $B$
  in order to determine $2$-approximations of $\lambda(A,\ell,row)$ and
  $\lambda(B,k,col)$, respectively. Knowing the approximations,
  we can choose to continue
  with the variant incurring a better upper time bound.}
\qed
\end{proof}
  
Similarly, we obtain the following corollary
from Theorem \ref{theo: rquery}.

\begin{corollary}\label{cor: rquery}
  Let $A$ and $B$ be two 0-1 matrices of sizes $p\times q$ and
  $q\times r,$ respectively. Given parameters $\ell \in [p]$
  and $ k\in [r],$
  the arithmetic matrix product of $A$ and $B$
can be computed by a simple randomized 
algorithm in time 
$O(p\log p(q + \ell)+r\log r(q+ k)+
  \ell q k  +pr(\lambda(A,\ell,row)+\lambda(B,k,col)+1))$ w.h.p.
\end{corollary}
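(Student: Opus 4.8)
The plan is to derive Corollary~\ref{cor: rquery} from Theorem~\ref{theo: rquery} in exactly the way Corollary~\ref{cor: query} is obtained from Theorem~\ref{theo: query}: run the randomized preprocessing once and then issue one query for every entry of the product matrix. First I would invoke Theorem~\ref{theo: rquery} to preprocess $A$ and $B$ with the given parameters $\ell\in[p]$ and $k\in[r]$; this costs $O(p\log p(q+\ell)+r\log r(q+k)+\ell q k+pr)$ time and, w.h.p., produces clusterings of the rows of $A$ and of the columns of $B$ whose radii guarantee the stated per-query time. Then I would answer all $pr$ queries $C_{ij}$ for $1\le i\le p$ and $1\le j\le r$, each in $O(\lambda(A,\ell,row)+\lambda(B,k,col))$ time by Theorem~\ref{theo: rquery}, for a total query cost of $O(pr(\lambda(A,\ell,row)+\lambda(B,k,col)))$.

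Summing the preprocessing time with the total query time, and absorbing the lower-order standalone $pr$ term (which is dominated whenever $\lambda(A,\ell,row)+\lambda(B,k,col)\ge 1$, while the degenerate case $\lambda(A,\ell,row)=\lambda(B,k,col)=0$ corresponds to exact clusterings for which the preprocessing already yields the product), I obtain the claimed bound $O(p\log p(q+\ell)+r\log r(q+k)+\ell q k+pr(\lambda(A,\ell,row)+\lambda(B,k,col)))$.

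The one point that deserves care is the probabilistic guarantee, and this is what I expect to be the only real obstacle. I would emphasize that the randomness is used \emph{once}, inside the preprocessing, where Fact~\ref{fact: 2plus} delivers a $(2+\epsilon)$-approximate clustering w.h.p. The correction performed while answering a query is exact regardless of clustering quality, so the computed product is always correct; it is only the query running time that depends on the cluster radii, through the sizes of the index sets $ind(A,i)$ and their column analogues. Consequently, a single successful preprocessing event simultaneously certifies the time bound for all $pr$ queries, so the $pr$-fold repetition does \emph{not} degrade the success probability and no union bound over the queries is needed; the entire computation meets its time bound with the same high probability as the preprocessing. Beyond verifying this localization of the randomness, the argument is just the direct addition of the two time bounds already supplied by Theorem~\ref{theo: rquery}.
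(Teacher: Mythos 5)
Your proposal is correct and matches the paper's (implicit) argument exactly: the paper derives Corollary~\ref{cor: rquery} from Theorem~\ref{theo: rquery} precisely by running the randomized preprocessing once and then issuing all $pr$ queries, just as Corollary~\ref{cor: query} is obtained from Theorem~\ref{theo: query}. Your additional observation that the randomness is confined to the preprocessing, so the per-query time bounds hold simultaneously under a single success event without any union bound, is a sound and welcome clarification of a point the paper leaves unstated.
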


Note that the upper time bound from
  Corollary~\ref{cor: query}
  coincides with
  that from Theorem \ref{theo: mmstclus}.
  However,
  the origins of the terms $\ell qr$ and $pqk$ in these
  two upper bounds
  are quite different. In case of Corollary~\ref{cor: query}, the
  terms  are caused by  the use of the straightforward
  matrix multiplication algorithm. 
They clearly could be improved if we used fast arithmetic
  multiplication, but then we would lose the simplicity of our
  algorithm. In case of Theorem~\ref{cor: query}, these terms reflect
  the asymptotic worst-case Hamming cost of connecting the $\ell$ or
  $k$ centers by an easily construable spanning tree. The cost
  might be
  much lower but it depends on the centers\footnote{
    One could also connect the centers to their $1$-median
    whose $i$-th coordinate is $1$ if the majority
    of centers have $1$ on this coordinate otherwise
    it is $0$. In this way the total Hamming cost of connecting
    the centers reduces 
    to $\ell q/2$ or $kq/2,$ respectively,
    but this does not yield any asymptotic improvement.}.
Finally, in the fully symmetric case, where
$p=q=r=n,$ $\ell=k,$ and $\lambda(A,\ell,row)=\lambda(B,k,col),$
  the upper time bound from Corollary~\ref{cor: query} and
  Theorem~\ref{theo: mmstclus} coincides with that for
  the Boolean matrix product established in \cite{AC11}.
\junk{
For example, suppose that $p=q=r=n$, $\ell=O(\sqrt n),$
$k=O(\sqrt n)$, and $\lambda(A,\ell,row)=O(\sqrt n)$,
$\lambda(B,k,col))=O(\sqrt n)$.
Then, both Corollary \ref{cor: query} and
Theorem \ref{theo: mmstclus} yield  simple deterministic 
algorithms for the arithmetic matrix multiplication of $A$ and $B$
running in $O(n^{2.5})$ time. However, in case
$\lambda(A,\ell,row)=\omega(\sqrt n)$ and
$\lambda(B,k,col))=O(\sqrt n)$ or {\em vice versa} only
Theorem \ref{theo: mmstclus} yields an $O(n^{2.5})$-time algorithm.

\begin{figure}[t]
 \begin{center}
   \includegraphics[width=0.9\textwidth]{centers}
\end{center}
\caption{The spanning tree $T_A$ of the rows of $A$.}
\label{fig: sp}
\end{figure}}


\section{Potential extensions}

The rows or columns in the input 0-1 matrices can be very long.  Also,
a large number of clusters might be needed in order to obtain
a low upper bound on their radius.  Among other things, for
these reasons, we have picked Gonzalez's classical algorithm for the
$k$-center clustering problem \cite{Gon85} as a basic tool in our
deterministic
approach to the arithmetic matrix product of two 0-1 matrices with
clustered rows or columns.  The running time of his algorithm is
linear not only in the number of input points but also in their
dimension, and in the parameter $k.$ Importantly, it is very simple,
 deterministic,
 provides a solution within $2$ of the optimum,
 and can be applied in Hamming spaces.  For instance, 
there exist  faster (in terms of $n$ and $k$)
$2$-approximation algorithms for
 $k$-center clustering  with hidden exponential dependence on the
 dimension in their running time, see \cite{FG88,HM06}.
 Among several newer works on speeding approximation algorithms
 for $k$-center clustering (e.g., \cite{EHS20,FJLNP25,JKS24,KLP26})
 only \cite{KLP26}
 explicitly includes
 Hamming spaces. The randomized approximation method
 for the $k$-center clustering problem in Hamming spaces from \cite{KLP26}
 is summarized in Fact \ref{fact: 2plus}. 
 It provides approximation guarantees arbitrarily close to $2,$
 is substantially faster than Gonzalez's algorithm when
 the dimension and $k$ are superlogarithmic,
 and is relatively simple. For these reasons, our improved
 randomized algorithms for the approximate and exact 0-1 matrix
 multiplication are based on this method.

One could easily generalize our main results by replacing
approximation algorithms for $k$-center clustering
with approximation algorithms for the more
 general problem of $k$-center clustering with outliers \cite{Char01}.
 In the latter problem, a given number $z$ of input points could be
 discarded as outliers when trying to minimize the maximum cluster
 radius. Unfortunately, the algorithms for this more general
 problem tend to be more complicated and the focus seems to
 be on the approximation ratio achievable in polynomial time
 (e.g., $3$ in \cite{Char01} and $2$ in \cite{lot})
 and not on the time complexity.

There are many other variants of clustering than $k$-center clustering, and
plenty of methods have been developed for them in the literature.
In fact, in the design
of efficient algorithms for the exact arithmetic matrix
product of 0-1 matrices with clustered rows or columns,
using the $k$-median clustering
could seem more natural. The objective in the latter
problem is to minimize the sum of distances
between the input points and their nearest centers.
Unfortunately, no simple deterministic $O(1)$-approximation algorithms
for the latter problem that are  efficient in case the dimension
and $k$ parameters are large seem be to available \cite{char99,chen}.

Our approximate and exact algorithms for the matrix product of 0-1
matrices as well as the preprocessing of the matrices can be
categorized as supervised since they assume that the user
has some knowledge on the input matrices and can choose
reasonable values of the parameters $\ell$ and $k$ guaranteeing
relatively low overall time complexity. Otherwise, one can try
the $\ell$-center and $k$-center clustering subroutines
for a number of combinations of different values of $\ell$
and $k$ in order to pick the combination yielding the lowest
upper bound on the overall time complexity of the algorithm or preprocessing.
The situation is somewhat easier to handle when the input matrices are square,
as shown in the following example.

\medskip 
\noindent
\emph{Example.}
  Let $A$ and $B$ be two 0-1 $n\times n$ matrices.
  The upper bound on the time needed to compute
  the arithmetic product of $A$
  and $B$ in Theorem \ref{theo: mmstclus} and
  Corollary \ref{cor: query}
  simplifies to $O(n^2\min\{\ell+\lambda(A,\ell,row),
  k+\lambda(B,k,col\})).$ Suppose
  that we would like to compute the product in time not exceeding
  roughly $ct(n)$, where $t$ is some nondecreasing positive-integer
  function satisfying $t(n)\ge n^2$
  and $c$ is a small positive constant. This implies that
  both $\ell$ and $k$ should not exceed $n^2/t(n)$ too much.
  On the other hand, both $\lambda(A,\ell,row)$ and $\lambda(B,k,col)$
  are nonincreasing with respect to $\ell$ or $k,$ respectively.
  Hence, we should pick $\ell $ and $k$ as large as possible, e.g.,
  $\lceil t(n)/n^2\rceil .$
  Note that we can compute $2$-approximations of
  $\lambda(A, \lceil t(n)/n^2\rceil, row)$
  and $\lambda(B,\lceil t(n)/n^2\rceil,col)$ by running
  Gonzalez's  algorithm in $O(t(n))$ time.
  If neither $\lambda(A, \lceil t(n)/n^2\rceil, row)$
  nor $\lambda(B,\lceil t(n)/n^2\rceil,col)$ is at most $dt(n)/n^2,$
  where $d$ is some low constant not exceeding $c$, 
  then we need to increase, e.g., double, the threshold function $t(n)$
  and repeat the procedure.


\subsection*{Acknowledgments}
The authors are grateful to the anonymous reviewers for their valuable
comments and suggestions.
Jesper Jansson was partially supported by Japan Society for the Promotion of Science - KAKENHI grant 24K22294.

\bibliographystyle{abbrv}
\bibliography{Bibl_clustering_matr_mult5}


\end{document}